\theoremstyle{plain}
\newtheorem{theorem}{Theorem}[section]
\newtheorem*{theorem*}{Theorem} 
\newtheorem{proposition}[theorem]{Proposition}
\newtheorem{lemma}[theorem]{Lemma}
\newtheorem{corollary}{Corollary}[theorem]
\theoremstyle{definition}
\newtheorem{definition}[theorem]{Definition}
\theoremstyle{definition}
\theoremstyle{definition}
\newtheorem*{definition*}{Definition}
\theoremstyle{definition}
\newtheorem{remark}[theorem]{Remark}
\theoremstyle{definition}
\newtheorem*{remark*}{Remark}
\theoremstyle{definition}
\theoremstyle{definition}
\newtheorem*{remarks*}{Remarks}
\theoremstyle{remark}
\theoremstyle{remark}
\newcommand{\deq}{\stackrel{\mathrm{def}}{=}}
\newcommand{\dd}{\,\mathrm{d}}
\newcommand{\ii}{\mathrm{i}}
\newcommand{\id}{\,\boldsymbol{1}}
\DeclareMathOperator*{\Dom}{Dom}
\DeclareMathOperator*{\Ran}{Range}
\newcommand{\PP}{\mathbb P}
\newcommand{\NN}{\mathbb N}
\newcommand{\RR}{\mathbb R}
\newcommand{\CC}{\mathbb C}
\newcommand*{\CCliff}{\CC\ell}
\DeclareMathOperator*{\End}{End}
\newcommand{\Expect}{\mathbb{E}}
\newcommand{\extp}{\@ifnextchar^\@extp{\@extp^{\,}}}
\def\@extp^#1{\mathop{\bigwedge\nolimits^{\!#1}}}
\newcommand*{\Extern}{\extp\!}
\newcommand*{\Spin}{\mathrm{Spin}}
\newcommand*{\SO}{\mathrm{SO}}
\newcommand{\tmaffiliation}[1]{\\ #1}
\title{Finite dimensional systems of free Fermions and diffusion processes on $\Spin$ groups}
\author{L. M. Borasi}
\author{Luigi M. Borasi\\[.5em]
  \tmaffiliation{Hausdorff Center of Mathematics \&\\
  Institute of Applied Mathematics\\
  University of Bonn, Germany}\\[.2em]
  \tt{\small borasi@iam.uni-bonn.de}
}
\begin{document}
\maketitle

\begin{abstract}
  In this article we are concerned with ``finite dimensional Fermions'',
  by which we mean vectors in a finite dimensional complex space
  embedded in the exterior algebra over itself.  
  These Fermions are spinless but possess the characterizing anticommutativity property.
  We associate invariant complex vector fields on the Lie group $\Spin(2n+1)$
  to the Fermionic creation and annihilation operators. 
  These vector fields are elements of the complexification of the regular representation of the Lie algebra
  $\mathfrak{so}(2n+1)$.
  As such, they do not satisfy the canonical anticommutation relations,
  however, once they have been projected onto an appropriate subspace of $L^2(\Spin(2n+1))$,
  these relations are satisfied.
  We define a free time evolution of this system of Fermions in terms of a symmetric positive-definite
  quadratic form in the creation-annihilation operators.
  The realization of Fermionic creation and annihilation operators brought by
  the (invariant) vector fields allows us to interpret this time evolution
  in terms of a positive selfadjoint operator which is the sum of a second order operator,
  which generates a stochastic diffusion process, and a first order
  complex operator, which strongly commutes with the second order operator.
  A probabilistic interpretation is given in terms of a Feynman-Kac like formula
  with respect to the diffusion process associated
  with the second order operator. 
\end{abstract}

\maketitle

\section{Introduction}
Probabilistic methods in Quantum Field Theory have proved to be particularly fruitful
(cf.\ e.g.~\cite{simon_pphi2_1974, glimm_quantum_1987, hida_white_1993}).
These methods have been almost exclusively restricted 
to \textit{Bosonic} Field Theories.
Some ideas of the Bosonic probabilistic methods
carry over, to an extent, to the Fermionic case using the beautiful algebraic technique of
Berezin integration~\cite{berezin_method_1966}. 
However, the Berezin integral, being defined in terms of Grassmann variables, 
does not lend itself easily to an interpretation in the context of probability theory or measure theory
(nevertheless, at least for the case of a discrete number of variables,
a probabilistic interpretation of the Berezin formalism is possible, albeit somewhat cumbersome:
cf.\ e.g.~\cite{combe_poisson_1980,de_angelis_berezin_1998}).

The aim of this work is to study a finite dimensional system of free Fermionic creation-annihilation
operators in a way which parallels, in the sense explained below,
the treatment of the corresponding Bosonic case of a
finite dimensional quantum harmonic oscillator.
It is well known that the Hamiltonian of an $n$-dimensional quantum harmonic oscillator
can be interpreted, in Euclidean times, as the second order differential operator
which generates an Ornstein-Uhlenbeck process on $\RR^n$
(cf.\ e.g.~\cite[p. 35]{simon_functional_2005}).
Following this Bosonic parallel we study a model for a finite dimensional system of Fermions
where the Fermionic Hamiltonian is replaced, in a quite natural way, by a second order differential operator.
Moreover we study the possibility of interpreting the time evolution generated by such Hamiltonian
in terms of stochastic processes.

The results in this work are inspired in part by the work of Schulman~\cite{schulman_path_1968} 
(cf.\ also~\cite[Chapters 22-24]{schulman_techniques_2005}) 
who gives a description of a single $\tfrac12$-spin particle 
in terms of the Feynman path integral. 
The precedents for Schulman's idea can be found in early work on Quantum Mechanics
connecting the Pauli $\tfrac12$-spin formalism with the
\textit{quantum spinning top} \cite{bopp_uber_1950,rosen_particle_1951} (cf. also the more recent work \cite{barut_magnetic_1992}).
The works which, to the knowledge of the author, are the closest in spirit to the analysis given here are 
\cite{fukutome_new_1977} and \cite{fukutome_so2n+1_1977}.
The situation studied here is nevertheless quite different.
We are motivated by the parallel between the (Bosonic) quantum harmonic oscillator and
the Ornstein-Uhlenbeck process.
For this reason we associate to the Hamiltonian of a finite system of free Fermions
a second order operator whereas in the works cited above the Hamiltonian was associated
to a first order operator.
The analysis that follows is therefore completely different.
Moreover, because of our motivation, we pay particular attention
to the rigorous, functional analytic details of our description.

We start our analysis by considering
Fermionic creation-annihilation operators $a_j^\dagger,a_j$, $j=1,\dots,n$, $n\in\NN$.
We associate to these Fermionic operators
first order differential operators.
We achieve this purpose, similarly to the works cited above, by first exploiting the standard fact
that the Fermionic creation-annihilation operators
give rise to a faithful, irreducible representation of the complex Lie algebra
$\mathfrak{so}(2n+1,\CC)$ (obtained by complexifying
the real Lie algebra $\mathfrak{so}(2n+1)$ of antisymmetric real $(2n+1)\times(2n+1)$-matrices).
This representation is usually called the \textit{spin representation} of $\mathfrak{so}(2n+1,\CC)$.
We therefore associate the Fermionic creation-annihilation operators
with abstract elements in the Lie algebra
$\mathfrak{so}(2n+1,\CC)$.
We then use the standard fact that this Lie algebra can be realized in terms
of (left-invariant) differential operators acting on smooth functions from the (real) Lie group $\Spin(2n+1)$
to $\CC$.
Here $\Spin(2n+1)$ is the simply-connected Lie group obtained as universal cover of
the Lie group $\SO(2n+1)$ of rotations in $2n+1$ dimensions.
We need to pass from $\SO(2n+1)$ to $\Spin(2n+1)$
because the spin-representation of $\mathfrak{so}(2n+1,\CC)$
does not appear inside the representation of $\mathfrak{so}(2n+1,\CC)$ given by left-invariant
vector fields acting on $C^\infty(\SO(2n+1))$ (indeed the spinor representation of $\SO(2n+1)$
is only a projective representation of $\SO(2n+1)$ and not an actual representation).
Having associated the Fermionic creation-annihilation operators $a_j^\dagger,a_j$
to differential operators $D_j^+,D_j^-$ 
we consider the free Fermionic Hamiltonian $H=\sum_{j=1}^n E_j a_j^\dagger a_j$
(for positive constants $E_j$) and we lift it to an element $\tilde{H}$ of the universal enveloping algebra
of $\mathfrak{so}(2n+1,\CC)$ which we look upon as the algebra of differential operators on $C^\infty(\Spin(2n+1)$
generated by the left-invariant vector fields together with the identity.
The lift $H\leadsto \tilde{H}$ is by its very nature non-canonical.
We choose to define $\tilde{H}=\sum_{j=1}^n D_j^+D_j^-$
by formally replacing the creation-annihilation operators in $H$
by their associated first order differential operators.
This choice differs from the one made by the references cited above.
The motivation for our choice is that we want to study a situation parallel to the Bosonic case,
where the free Hamiltonian of a quantum harmonic oscillator corresponds to a second order differential operator.
The main results we obtain about the operator $\tilde{H}$ are contained in theorem
\ref{Theorem:quasi Hamiltonian}, which contains functional analytic properties regarding the operator
$\tilde{H}$, and in theorem \ref{Feynam-Kac},
where we give a Feynman-Kac like formula describing the evolution in Euclidean time
generated by $-\tilde{H}$.

The layout of the article is as follows.
In section \ref{2} we give some basic definitions and describe the standard relation between $n$
Fermionic creation-annihilation operators and the spin representation of the Lie algebra
$\mathfrak{so}(2n+1,\CC)$, in a way which is well suited for our needs.
In section \ref{3} we briefly describe the standard connection between Lie algebra
and left-invariant differential operators.
We then specialize this general relation to our setting and describe
how to recover, from this global picture, the spin representation of $\mathfrak{so}(2n+1,\CC)$.
In section \ref{4}, after some remarks regarding selfadjointness
and the universal enveloping algebra of a compact Lie group,
we define the operator $\tilde{H}$ and describe some of its most salient functional analytic properties
(theorem \ref{Theorem:quasi Hamiltonian}).
Finally in section \ref{5} we introduce some standard facts regarding stochastic processes on Lie groups
and we apply the general theory to our case.
The main result is that the operator $\tilde H$
splits into two parts, a strictly second order part which is hypoelliptic and generates a diffusion on $\Spin(2n+1)$
and a first order part which, as explained in section \ref{5},
does not contribute to a diffusion on $\Spin(2n+1)$.
On the other hand, since it strongly commutes with the second order part,
we are able to write a simple Feynman-Kac like formula for the operator $\tilde{H}$
where we average over the process generated by the strictly second order part of $-\tilde{H}$.
We note that our Feynman-Kac like formula resembles the
Feynman-Kac formula for the Schr\"odinger operator of a particle in a magnetic field
(cf. e.g. \cite[Chapter V, Section 15.]{simon_functional_2005}).

\section{Fermions and \texorpdfstring{$\mathfrak{so}(2n+1,\CC)$}{so(2n+1,C)}}\label{2}
In this section, after some notational preliminaries,
we describe the relation between the Fermionic creation-annihilation operators and
the $1/2$-spin representations of the Lie algebra $\mathfrak{so}_\CC$.
Since this relation is not very broadly known we provide some details.

Let us denote by $\CCliff(N)$, $N\in\NN$, the complex \textit{Clifford algebra} over $\CC^N$,
that is the unital associative algebra obtained as the quotient
of the full tensor algebra $T(\CC^N)$ by the following relations, 
\begin{equation*}
  \{v,w\} = -2\langle v,w\rangle\id,
  \quad v,w\in\CC^N
  ,
\end{equation*}
where $\langle\cdot,\cdot\rangle$
denotes the standard symmetric bilinear form on $\CC^n$,
and $\{v,w\}\deq vw+wv$ where we have denoted the product of $v,w$ in $\CCliff(N)$
simply by $vw$.

Let $V$ be a real or complex , finite dimensional, vector space.
We denote by $\Extern V$ the \textit{exterior algebra} over $V$, that is the algebra obtained by quotienting
the full tensor algebra $T(V)$ over $V$ by the two sided ideal generated by elements of the form
$v\otimes v$, $v\in V$. 
The finite dimensional Fermionic Fock space $\Gamma_\wedge\CC^n\deq\bigoplus_{k=0}^n (\CC^n)^{\wedge k}$,
$n\in\NN$,
is defined as the Hilbert space realized by taking
the exterior algebra $\Extern \CC^n$ just as vector space and equipping it with the
Hermitian scalar product $(\cdot,\cdot)_{\Gamma\CC^n}$ which satisfies
$(v_1\wedge\cdots\wedge v_n,w_1\wedge\cdots\wedge w_n)_{\Gamma_\wedge\CC^n} \deq \det_{jk}(v_j,w_k)_{\CC^n}$
for $v_j,w_k\in\CC^n$, $j,k=1,\dots,n$,
where $(\cdot,\cdot)_{\CC^n}$ denotes the standard Hermitian scalar product on $\CC^n$
(antilinear in the left component).

We call the element $1\in\CC\hookrightarrow\Extern\CC^n$ the 
\textit{vacuum vector}.
For $v\in\CC^n$, we denote the \textit{Fermionic creation}, \textit{annihilation operators} on $\Gamma_\wedge\CC^n$ 
respectively by $c^\dagger(v)$, $c(v)$.
Explicitly, for any $v\in\CC^n$, $c(v)$ and $c^\dagger(v)$ are defined as Hilbert adjoint of each other in $\Gamma_\wedge\CC^n$
with $c^\dagger(v)\psi = v\wedge\psi$ where $\psi\in\Gamma_\wedge\CC^n$.
Note that by this definition $c^\dagger(v)$ is (complex) linear in $v$
whereas $c(v)$ is (complex) anti-linear.
If $e_j$, $j\in\{1,\dots,n\}$, denotes the standard basis of $\CC^n$,
then we denote $c(e_j)$, respectively $c(e_j)^\dagger$, by $c_j$, respectively $c_j^\dagger$.
They satisfy the usual canonical anticommutation relations: $\{c_j,c_k^\dagger\}=\delta_{jk}$,
$\{c_j,c_k\}=\{c_j^\dagger,c_k^\dagger\}=0$
(where $\{A,B\}\deq AB+BA$ for any $A,B$ in an associative algebra).

Let us fix a decomposition $\CC^{2n}=\CC^n\oplus\CC^n$
orthogonal with respect to the standard symmetric bilinear form $\langle\cdot,\cdot\rangle$
of $\CC^{2n}$, and let $\mathbb P_1$,
respectively $\mathbb P_2$, be the projection of $\CC^{2n}$ onto the first,
respectively the second, copy of $\CC^n$.
Moreover let us denote by $\bar v$ the vector obtained from $v\in\CC^{2n}$
by complex conjugating each component.
Then the standard Hermitian scalar product $(\cdot,\cdot)$ on $\CC^{2n}$ satisfies
$(v,w)=\langle\bar v, w\rangle$, $v,w\in\CC^{2n}$.
Let us define the algebra isomorphism $\gamma:\CCliff(2n)\rightarrow \End(\Extern\CC^n)$
by extending to the whole of $\CCliff(2n)$
the relations
\begin{equation}
  \label{Fock rep}
  \begin{aligned}
    \gamma(v) &= c^\dagger(\mathbb P_1 v) - c(\overline{\mathbb P_1 v})
    -\ii ( c^\dagger((\mathbb P_2 v) + c(\overline{\mathbb P_2 v}) ),
    \quad v\in\CC^{2n}
    .
  \end{aligned}
\end{equation}
The map $\gamma$ is a representation
of $\CCliff(2n)$ on the Fock space $\Gamma_\wedge\CC^n$
usually called the \textit{Fock space representation} of $\CCliff(2n)$.

Let $e_j$, $j=1,\dots,2n$, be the standard basis of $\CC^{2n}$.
We also denote by $e_\ell$, $\ell=1,\dots,n$, a basis for each of the copies of $\CC^n$
in the decomposition $\CC^{2n}=\CC^n\oplus\CC^n$.
To be concrete, in the following we will take $\PP_1$, to be  the projection which sends $e_{2j-1}$ to $e_j$ and $e_{2j}$ to zero.
Then $\PP_2$ sends $e_{2j}$ into $e_j$ and $e_{2j-1}$ to zero.

Let $\gamma_j \deq \gamma(e_j)$, $j=1,\dots,2n$.
Then, with this choice of $\PP_1$, $\PP_2$ we obtain, from \eqref{Fock rep}, the following relations
\begin{equation}
  \label{creation annihilation j}
  \begin{aligned}
    c_j^\dagger &= \tfrac12(\gamma_{2j-1} + \ii\gamma_{2j})\\
    c_j &= \tfrac12(-\gamma_{2j-1} + \ii\gamma_{2j}),
    \quad j=1,\dots n
    .
  \end{aligned}
\end{equation}
Note that, under these definitions, the operators $\gamma_j=\gamma(e_j)$,
are anti-Hermitian
as operators on the finite dimensional Hilbert space $\Gamma_\wedge\CC^n$.

Let $\mathfrak{so}(N)$,
$N\in\NN$, denote the complex Lie algebra of antisymmetric $N\times N$ real matrices.
and let $\mathfrak{so}(N,\CC)=\mathfrak{so}(N)\otimes_\RR\CC$ be its complexification.

We now define the standard $1/2$-spin representation of $\mathfrak{so}(2n+1,\CC)$
on $\Extern\CC^n$. 
We give the definition in a form which differs slightly from the standard presentations
(cf. e.g. \cite{fulton_representation_1991,goodman_symmetry_2009}) therefore we provide some details.

Consider an embedding $\iota:\mathfrak{so}(2n)\hookrightarrow\mathfrak{so}(2n+1)$
and the relative vector space decomposition
$\mathfrak{so}(2n+1,\CC)=V_{2n} \oplus \iota(\mathfrak{so}(2n,\CC))$,
where $V_{2n}=\mathfrak{so}(2n+1,\CC)/\iota(\mathfrak{so}(2n,\CC))$ is
a $2n$-dimensional vector space which generates, via the Lie brackets, all of $\mathfrak{so}(2n+1,\CC)$.

Let $\CCliff(V_{2n})$ be the complex Clifford algebra
generated by the identity $\id$ and by the symbols
$\kappa(X)$, $X\in V_{2n}$, which satisfy
\begin{equation*}
  \{\kappa(X),\kappa(Y)\} = \tfrac{1}{4}\mathrm{tr}(XY)\id,\quad X,Y\in V_{2n},
\end{equation*}
where $\mathrm{tr}$ denotes the trace in the defining representation of $\mathfrak{so}(2n+1,\CC)$
and $XY$ denotes the product of $X$ and $Y$ as $(2n+1)\times(2n+1)$ matrices.
Let us identify $V_{2n}$ with $\CC^{2n}$ and denote one such isomorphism by $\phi$.
Then $\phi$ extends to an isomorphism of $\CCliff(V_{2n})$ with $\CCliff(2n)$
and the composition
$$
\gamma\circ\phi:\CCliff(V_{2n})\rightarrow\textrm{End}(\Extern\CC^n)
,
$$
of the isomorphism $\phi$ with a Fock representation $\gamma$ of $\CCliff(2n)$,
defines a Fock representation\footnote
{
  One could more efficiently define a Fock representation directly of $\CCliff(V_{2n})$
  without any reference to $\CCliff(2n)$, $\CC^{2n}$, or the non-canonical isomorphism $\phi$.
  We choose to introduce $\phi$ and replace $V_{2n}$ by $\CC^{2n}$ and $\CCliff(V_{2n})$ by $\CCliff(2n)$
  to make the computations in the following sections more concrete. 
} of $\CCliff(V_{2n})$.
The following proposition shows how the Clifford algebra $\CCliff(V_{2n})$
with a Fock representation $\gamma\circ\phi$
gives rise to an irreducible representation $\pi^{1/2}$ of $\mathfrak{so}(2n+1,\CC)$
which is unique up to isomorphism and coincides with the standard $1/2$-spin representation
of $\mathfrak{so}(2n+1)$.
\begin{proposition}
  The map $\kappa$ extends to a Lie algebra homomorphism
  $\mathfrak{so}(2n+1,\CC)\rightarrow\CCliff(V_{2n})$, still denoted by $\kappa$,
  which sends the Lie brackets of $\mathfrak{so}(2n+1,\CC)$ into the commutator $[A,B]=AB-BA$,
  for $A,B\in\CCliff(V_{2n})$.
  The composition
  \begin{equation*}
    \pi^{1/2}\deq \gamma\circ\phi\circ\kappa :\mathfrak{so}(2n+1,\CC)\rightarrow\text{End}(\Extern\CC^n)
    ,
  \end{equation*}
  of the homomorphism $\kappa$ with the Clifford algebra isomorphism $\phi$ and
  with the Clifford algebra representation $\gamma$,
  defines a representation $\pi^{1/2}$ of $\mathfrak{so}(2n+1,\CC)$.
  This representation is isomorphic to the standard
  $1/2$-\textit{spin representation} of $\mathfrak{so}(2n+1,\CC)$,
  that is, the irreducible representation of $\mathfrak{so}(2n+1,\CC)$
  on $\Extern\CC^n$.
\end{proposition}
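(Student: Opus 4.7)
My plan is to exploit the fact that $V_{2n}$ generates $\mathfrak{so}(2n+1,\CC)$ via Lie brackets. Realizing $V_{2n}$ concretely as the span of $X_i \deq E_{i,2n+1}-E_{2n+1,i}$ for $i=1,\dots,2n$, a direct matrix computation gives $[X_i,X_j]=E_{ji}-E_{ij}$, whence $[V_{2n},V_{2n}]=\iota(\mathfrak{so}(2n,\CC))$ and the bracket induces a linear isomorphism $\Lambda^2 V_{2n}\xrightarrow{\sim}\iota(\mathfrak{so}(2n,\CC))$ (both spaces have dimension $\binom{2n}{2}$). Similarly, the assignment $(X,Y)\mapsto [\kappa(X),\kappa(Y)]=2\kappa(X)\kappa(Y)-\tfrac14\mathrm{tr}(XY)\id$ is antisymmetric in $X,Y$ (the scalar term is symmetric and drops out) and so factors through $\Lambda^2 V_{2n}$. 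I would then define $\kappa$ on $\iota(\mathfrak{so}(2n,\CC))$ as the composition
\begin{equation*}
\iota(\mathfrak{so}(2n,\CC))\xrightarrow{\sim}\Lambda^2 V_{2n}\longrightarrow\CCliff(V_{2n}),
\end{equation*}
and extend linearly to all of $\mathfrak{so}(2n+1,\CC)$; this makes well-definedness automatic.

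Next I would verify the Lie bracket compatibility $\kappa([A,B])=[\kappa(A),\kappa(B)]$. By Jacobi, the only nontrivial case is the mixed bracket $[[X,Y],W]$ with $X,Y,W\in V_{2n}$. On the Clifford side, using the relation $\{\kappa(X),\kappa(W)\}=\tfrac14\mathrm{tr}(XW)\id$ twice to move $\kappa(W)$ through $\kappa(X)\kappa(Y)$, one obtains
\begin{equation*}
[[\kappa(X),\kappa(Y)],\kappa(W)]=\tfrac12\mathrm{tr}(YW)\,\kappa(X)-\tfrac12\mathrm{tr}(XW)\,\kappa(Y).
\end{equation*}
On the matrix side, a direct computation with the antisymmetric matrices $X_i,X_j,X_k$ yields $[[X,Y],W]=\tfrac12\mathrm{tr}(YW)\,X-\tfrac12\mathrm{tr}(XW)\,Y$, which matches the Clifford expression after applying $\kappa$. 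I expect this bracket identity to be the main obstacle: conceptually straightforward, but requiring careful bookkeeping of signs and of the Clifford anticommutator. The case of two elements in $\iota(\mathfrak{so}(2n,\CC))$ reduces to the previous one by another application of Jacobi.

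Once $\kappa$ is a Lie homomorphism, $\pi^{1/2}=\gamma\circ\phi\circ\kappa$ is automatically a Lie algebra representation on $\Extern\CC^n$, since $\gamma\circ\phi$ is an associative algebra representation. Irreducibility is immediate from \eqref{creation annihilation j}: $\pi^{1/2}(V_{2n})$ contains every $\gamma_j$ and hence every $c_j$, $c_j^\dagger$; any nonzero $\psi\in\Extern\CC^n$ can be reduced to the vacuum $1\in\CC\hookrightarrow\Extern\CC^n$ by successive annihilations, and creations from the vacuum recover all of $\Extern\CC^n$. For the identification with the standard $1/2$-spin representation, I would pick a Cartan subalgebra of $\mathfrak{so}(2n+1,\CC)$ inside $\iota(\mathfrak{so}(2n,\CC))$, exhibit a highest-weight vector in $\Extern\CC^n$ of weight $\tfrac12(1,\dots,1)$, and invoke uniqueness of irreducible representations of a simple Lie algebra with a prescribed highest weight; alternatively, since $\mathfrak{so}(2n+1,\CC)$ admits a unique irreducible representation of dimension $2^n=\dim\Extern\CC^n$ up to isomorphism, the identification follows at once.
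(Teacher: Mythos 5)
Your proposal is correct, and its overall strategy coincides with the paper's: define $\kappa$ on $V_{2n}$, extend it using the fact that $V_{2n}$ generates $\mathfrak{so}(2n+1,\CC)$ via brackets, compose with the Fock representation of the Clifford algebra, and get irreducibility from the creation--annihilation operators by reducing an arbitrary nonzero vector to the vacuum. The difference lies in how the extension step is handled. The paper fixes the explicit basis $X_{jk}$, declares $\phi\circ\kappa(X_{jk})=\tfrac12 e_je_k$ for $j<k\le 2n$, and dismisses the verification of the commutation relations with ``one can easily show''; you instead make the construction basis-free, obtaining well-definedness by factoring the antisymmetric map $(X,Y)\mapsto[\kappa(X),\kappa(Y)]$ through the isomorphism $\Lambda^2V_{2n}\cong\iota(\mathfrak{so}(2n,\CC))$ induced by the bracket, and you actually carry out the one substantive check, namely $[[\kappa(X),\kappa(Y)],\kappa(W)]=\tfrac12\mathrm{tr}(YW)\kappa(X)-\tfrac12\mathrm{tr}(XW)\kappa(Y)$ matching $[[X,Y],W]$; both sides of that identity are computed correctly (the signs are consistent with $\mathrm{tr}(XW)=-2\langle v_X,v_W\rangle$ for $X=v_X\wedge e_{2n+1}$), and the reduction of the remaining case to this one via Jacobi is sound. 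What your version buys is that the omitted verification is actually supplied, in coordinate-free form; what it costs is the extra bookkeeping with $\Lambda^2V_{2n}$ that the paper's basis choice makes invisible. On the last point: since the paper \emph{defines} the standard $1/2$-spin representation as the irreducible representation of $\mathfrak{so}(2n+1,\CC)$ on $\Extern\CC^n$, irreducibility already closes the argument there; your highest-weight identification is the right way to match the textbook definition, whereas the dimension-count shortcut rests on the true but not self-evident fact that $2^n$ is attained only by the spin representation of $B_n$, so the highest-weight route should be the one you keep.
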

\begin{proof}
  Without loss of generality let us fix $\{X_{jk}\}_{1\le j<k\le 2n+1}$,
  to be the standard basis of $\mathfrak{so}(2n+1,\CC)$
  given by  matrices $X_{jk}=e_j\wedge e_k\deq e_j\otimes e_k-e_k\otimes e_j$, where $e_\ell$, $\ell=1,\dots,2n+1$,
  is the standard basis of $\CC^{2n+1}$. Explicitly, these matrices have components 
  $[X_{jk}]_{j'k'}=\delta_{jj'}\delta_{kk'}-\delta_{jk'}\delta_{kj'}$,
  $1\le j<k\le 2n+1$, $j',k'\in\{1,\dots,2n+1\}$.
  and satisfy the commutation relations,
  for $1\le r<i\le 2n+1$ and $1\le s<j\le 2n+1$,
  \begin{equation}
    \label{Lie brakets}
    [X_{ri},X_{sj}] = \delta_{is} X_{rj} + \delta_{rj}X_{is}  -\delta_{ij}X_{rs}-\delta_{rs}X_{ij}
    .
  \end{equation}
  We consider the embedding $\iota:\mathfrak{so}(2n,\CC)\hookrightarrow\mathfrak{so}(2n+1,\CC)$
  obtained by setting
  $\iota(\mathfrak{so}(2n,\CC))$ to be the subalgebra of $\mathfrak{so}(2n+1,\CC)$
  spanned by $\{X_{jk}\}_{j,k\in\{1,\dots,2n\}}$,
  and we let $V_{2n}$ be the vector space spanned by $\{X_{j,2n+1}\}_{j=1,\dots,2n}$.
  Then we have the decomposition $\mathfrak{so}(2n+1,\CC) = V_{2n} \oplus \iota(\mathfrak{so}(2n,\CC))$,
  indeed,
  $V_{2n}\cap\iota(\mathfrak{so}(2n,\CC))= \{0\}$,
  and the direct sum of $V_{2n}$ and
  $\mathfrak{so}(2n,\CC)$ is spanned by the whole basis above hence coincides
  with $\mathfrak{so}(2n+1,\CC)$.

  A straightforward computation shows that
  $\textrm{tr}(X_{a,2n+1}X_{b,2n+1}) = -2\delta_{ab}$,
  $a,b\in\{1,\dots,n\}$.
  Hence, in this basis, we have 
  \begin{equation*}
    \{\kappa(X_{a,2n+1}),\kappa(X_{b,2n+1})\} = -\tfrac12\delta_{ab},
    \quad a,b\in\{1,\dots,2n\}
    .
  \end{equation*}
  Using these anticommutation relations and the commutation relations \eqref{Lie brakets}
  one can easily show that $\kappa$ extends to a homomorphism
  of Lie algebras as claimed in the statement of the proposition.

  Let us identify $V_{2n}$ with $\CC^{2n}$,
  and therefore $\CCliff(V_{2n})$ with $\CCliff(2n)$,
  by the map $\phi$ which sends, for all $j=1,\dots,2n$, $X_{j,2n+1}$
  into the standard basis element $e_j$ of $\CC^{2n}$.
  Then we have
  \begin{equation*}
    \begin{aligned}
      (\phi\circ\kappa)(X_{j,2n+1}) &= \tfrac12 e_j,\quad j=1,\dots,2n,\\
      (\phi\circ\kappa)(X_{jk}) &= \tfrac12 e_je_k,\quad 1\le j<k\le 2n
      .
    \end{aligned}
  \end{equation*}

  We let, as in \eqref{creation annihilation j}, $\gamma$ be the Fock space representation of $\CCliff(2n)$ 
  with projections $\PP_1, \PP_2$ such that $\PP_1e_{2j-1}= \PP_2e_{2j} = e_{j}$, $\PP_1e_{2j} = \PP_2e_{2j-1}=0$.  
  Now, since $\gamma$ is a representation of $\CCliff(2n)$ on $\Extern\CC^n$,
  it is clear that $\pi^{1/2}=\gamma\circ\phi\circ\kappa$ extends to a representation
  of $\mathfrak{so}(2n+1,\CC)$ on $\Extern\CC^n$.
 
  Note that with these conventions
  \begin{equation*}
    \begin{aligned}
      \pi^{1/2}(X_{\ell,2n+1}) &= \tfrac12\gamma_\ell,\\
      \pi^{1/2}(X_{j\ell}) &= \tfrac12\gamma_j\gamma_\ell,
      \quad 1\le j< \ell\le 2n
      .
    \end{aligned}
  \end{equation*}
  Let
  \begin{equation}
    \label{eq:4}
    E_j = X_{2j-1,2n+1} +\ii  X_{2j,2n+1},
    \quad
    E_{-j} = -X_{2j-1,2n+1} +\ii  X_{2j,2n+1}
    .
  \end{equation}
  Then, from our choice of $\phi$ and $\gamma$, 
  \begin{equation*}
    c_j^\dagger = \pi^{1/2}(E_j) ,
    \quad
    c_j = \pi^{1/2}(E_{-j})
    ,
  \end{equation*}
  coincide with the standard creation annihilation operators
  on $\Gamma_\wedge \mathbb \CC^n$ as in \eqref{creation annihilation j}.
  
  Because of this we see that  $\pi^{1/2}$ is irreducible.
  Indeed, by repeated applications of $\pi^{1/2}(E_j)=a_j$ or
  $\pi^{1/2}(E_{-j})=a_j^\dagger$
  any invariant subspace of $\Extern\CC^n$ under
  $\pi^{1/2}(\mathfrak{so}(2n+1,\CC))$
  has to include all of $\Extern\CC^n$.
  Therefore this representation coincides with the standard $1/2$-spin
  representation of $\mathfrak{so}(2n+1,\CC)$ and  the proof is complete.
\end{proof}
We will need to consider the real Lie algebra $\mathfrak{so}(2n+1)$
alongside $\mathfrak{so}(2n+1,\CC)$. Hence we note the following fact.
\begin{corollary}
  The representation $\pi^{1/2}$  restricts to an irreducible representation,
  also called $1/2$-spin representation, of the real Lie algebra   $\mathfrak{so}(2n+1)$.
\end{corollary}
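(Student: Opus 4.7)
The plan is to leverage the irreducibility already established for $\pi^{1/2}$ viewed as a complex representation of $\mathfrak{so}(2n+1,\CC)$, and then to observe that the notion of complex-invariant subspace for the restriction to $\mathfrak{so}(2n+1)$ agrees with the notion of complex-invariant subspace for the full complexification.

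First I would note that $\mathfrak{so}(2n+1)$ sits inside $\mathfrak{so}(2n+1,\CC) = \mathfrak{so}(2n+1)\otimes_\RR\CC$ as the real subalgebra $\mathfrak{so}(2n+1)\otimes 1$, so $\pi^{1/2}$ automatically restricts to a Lie algebra homomorphism from $\mathfrak{so}(2n+1)$ into $\End(\Extern\CC^n)$.

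Next, for irreducibility, let $W\subseteq\Extern\CC^n$ be a complex subspace invariant under $\pi^{1/2}(X)$ for every $X\in\mathfrak{so}(2n+1)$. Every element $Z\in\mathfrak{so}(2n+1,\CC)$ admits a decomposition $Z=X+\ii Y$ with $X,Y\in\mathfrak{so}(2n+1)$, and since $\pi^{1/2}$ is complex linear on $\mathfrak{so}(2n+1,\CC)$ we have $\pi^{1/2}(Z)=\pi^{1/2}(X)+\ii\,\pi^{1/2}(Y)$. Because $W$ is a complex subspace, it is closed under the action of $\ii\,\pi^{1/2}(Y)$ as soon as it is closed under $\pi^{1/2}(Y)$; hence $\pi^{1/2}(Z)W\subseteq W$ for all $Z\in\mathfrak{so}(2n+1,\CC)$. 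By the irreducibility of $\pi^{1/2}$ established in the preceding proposition, $W=\{0\}$ or $W=\Extern\CC^n$, which is exactly the irreducibility of the restriction.

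There is no real obstacle here: the only point that deserves emphasis is the distinction between real and complex invariance, and the argument above shows that for a complex-linear representation of a real form, complex-invariant subspaces for the real form and for its complexification coincide. Concretely, this reduces irreducibility of the restriction to the already-proved irreducibility of $\pi^{1/2}$ on $\mathfrak{so}(2n+1,\CC)$.
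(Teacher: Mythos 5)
Your proof is correct. The paper takes a marginally different route: after noting that $\pi^{1/2}$ is complex linear (because $\gamma(v)$ is complex linear in $v$) and hence restricts to the real form, it simply cites the ``Weyl unitary trick'' to conclude irreducibility of the restriction. What you do instead is prove the relevant direction of that trick by hand: for a complex-linear representation of $\mathfrak g_\CC$ on a complex space, a complex subspace invariant under the real form $\mathfrak g$ is automatically invariant under all of $\mathfrak g_\CC$, since $\pi^{1/2}(X+\ii Y)W\subseteq W+\ii W=W$. This is exactly the content the citation is doing for the paper, so the two arguments are mathematically the same; yours has the advantage of being self-contained and of making explicit the one point that matters (the distinction between real-linear and complex-linear invariance), while the paper's version is shorter at the cost of invoking a general theorem whose full strength (the converse direction, relating unitary representations of the compact group to holomorphic representations of the complexification) is not needed here. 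Note also that your argument silently uses the complex linearity of $\pi^{1/2}$ on $\mathfrak{so}(2n+1,\CC)$; that is indeed how the representation is defined in the paper, and the paper's proof flags the same hypothesis, so there is no gap.
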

\begin{proof}
  Indeed, with the same notation as in the proof of the proposition above, we have
  \begin{equation*}
    \begin{aligned}
      \pi^{1/2}(X_{\ell,2n+1}) &= \tfrac12\gamma_\ell,\\
      \pi^{1/2}(X_{j\ell}) &= \tfrac12\gamma_j\gamma_\ell,
      \quad 1\le j< \ell\le 2n
      .
    \end{aligned}
  \end{equation*}
  Being $\gamma(v)$ complex linear in $v\in\CC^{2n}$ we have that
  $\pi^{1/2}$ is an irreducible analytic representation of $\mathfrak{so}(2n+1,\CC)$
  and it naturally restricts to a well defined representation of the real Lie algebra
  $\mathfrak{so}(2n+1)$.
  By the ``Weyl unitary trick'' (cf. \cite[Theorem 3, \SS{1}, Chapter 8, pp. 202-203]{barut_theory_1986})
  the restricted representation is indeed irreducible.
\end{proof}
\begin{remark}\label{remark: lowest weight}
  We note that, under the same conventions as the proof of the proposition above,
  the vector $1\in\Gamma_\wedge\CC^n$ can be seen as a lowest weight vector with relative
  weight $(-\tfrac12,\cdots,-\tfrac12)\in\CC^n$.
  
  To show this let us employ the same notation as in the proof above and
  let us fix a Cartan subalgebra $\mathfrak h$ of $\mathfrak{so}(2n+1,\CC)$ generated by
  the elements
  \begin{equation}
    \label{Cartan subalgebra}
    H_j = \tfrac12 [E_jE_{-j}] = \ii X_{2j-1,2j}
    ,\quad j=1,\dots,n
    .
  \end{equation}
  These generators are normalized in such a way that, if we identify
  $\mathfrak h$ with $\CC^n$ by sending $H_j$ into $e_j$
  and we equip $\CC^n$ with its standard symmetric bilinear form
  $\langle\cdot,\cdot\rangle$, then the dual space $\mathfrak h^\ast$
  is itself isomorphic to $\CC^n$ and the dual of an element $H_j\cong e_j$
  is the element $H_j\cong e_j$ itself.
  Now, $H_j 1 = -\tfrac12 c_j c_j^\dagger 1 = -\tfrac12$.
  Hence the vector $1\in\Gamma_\wedge\CC^n$ is associated to the weight
  $\lambda\in \mathfrak h^\ast$ such that $\lambda(H_j)=-\tfrac12$, for all $j=1,\dots,n$.
  Under the identifications given above, of $\mathfrak h^\ast\cong\CC^n\cong\mathfrak h$,
  where $\CC^n$ is identified with its dual via the standard symmetric bilinear form on $\CC^n$,
  the weight $\lambda$ corresponds to the vector
  \begin{equation*}
    \big(-\tfrac12,\dots,-\tfrac12\big)\in\CC^n
    .
  \end{equation*}
  Under the natural order of $\RR$ this weight is the lowest weight of the representation.
  Hence we have shown that,
  under our conventions,
  $1\in\Gamma_\wedge\CC^n$ is the (normalized) lowest weight vector.

  Similarly, under the same conventions,
  one can show that $e_1\wedge\cdots\wedge e_n\in\Gamma_\wedge\CC^n$
  is the  (normalized) highest weight vector corresponding to the highest weight
  $ \big(\tfrac12,\dots,\tfrac12\big)\in\CC^n$.

  By repeated use of the creation annihilation operators one shows that
  a general weight is of the form 
  $(\underbrace{\pm\tfrac12,\dots,\pm\tfrac12}_{n\text{ times}})$
  with a given number of ``plus signs'' and the complementary number of ``minus signs''.
  From a physical perspective, the plus signs in the \textit{weight} $\lambda$ denote ``filled states'',
  that is to every ``plus sign'' there corresponds a Fermionic particle in the respective state.
\end{remark}
We conclude by stating the following fact which will be regularly used in the following sections.
\begin{proposition}\label{universal rep}
  If we let $\mathfrak U(\mathfrak{so}(2n+1,\CC))$ be the universal enveloping algebra of $\mathfrak{so}(2n+1,\CC)$
  then $\pi^{(1/2)}$ extends naturally to a representation of the universal enveloping algebra and we have
  $\pi^{(1/2)}(\mathfrak U(\mathfrak{so}(2n+1,\CC)) \cong \CCliff(2n)$.
\end{proposition}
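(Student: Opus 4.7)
The plan is first to extend $\pi^{1/2}$ from the Lie algebra to the universal enveloping algebra by abstract nonsense, and then to identify the image using the explicit basis formulas already computed in the preceding proof.

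For the extension: the universal property of $\mathfrak U(\mathfrak{so}(2n+1,\CC))$ says that any Lie algebra homomorphism from $\mathfrak{so}(2n+1,\CC)$ into the commutator Lie algebra underlying a unital associative algebra extends uniquely to a unital associative algebra homomorphism out of $\mathfrak U(\mathfrak{so}(2n+1,\CC))$. Applied to the Lie algebra homomorphism $\pi^{1/2}:\mathfrak{so}(2n+1,\CC)\rightarrow \End(\Extern\CC^n)$ constructed in the previous proposition, this immediately yields a unital algebra homomorphism $\pi^{1/2}: \mathfrak U(\mathfrak{so}(2n+1,\CC))\rightarrow \End(\Extern\CC^n)$, which I continue to denote by $\pi^{1/2}$. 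Its image is then a unital subalgebra of $\End(\Extern\CC^n)$.

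To pin down this image I would invoke the explicit formulas derived in the proof of the previous proposition, in particular $\pi^{1/2}(X_{\ell,2n+1}) = \tfrac12 \gamma_\ell$ for $\ell=1,\dots,2n$, and $\pi^{1/2}(X_{j\ell}) = \tfrac12 \gamma_j\gamma_\ell$ for $1\le j<\ell\le 2n$. In particular every $\gamma_\ell$ lies in $\pi^{1/2}(\mathfrak U(\mathfrak{so}(2n+1,\CC)))$. Now the Fock representation $\gamma:\CCliff(2n)\rightarrow\End(\Extern\CC^n)$ is, as recorded earlier, an algebra isomorphism, and $\CCliff(2n)$ is generated as a unital associative algebra by the standard basis $e_1,\dots,e_{2n}$ of $\CC^{2n}$, whose images under $\gamma$ are precisely $\gamma_1,\dots,\gamma_{2n}$. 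Hence $\gamma(\CCliff(2n))=\End(\Extern\CC^n)$ is already contained in $\pi^{1/2}(\mathfrak U(\mathfrak{so}(2n+1,\CC)))$, and the reverse inclusion is immediate from the two displayed formulas and the fact that $\gamma(\CCliff(2n))$ is closed under multiplication. Therefore $\pi^{1/2}(\mathfrak U(\mathfrak{so}(2n+1,\CC))) = \gamma(\CCliff(2n))$, and since $\gamma$ is an algebra isomorphism this gives the desired identification with $\CCliff(2n)$.

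I do not foresee any genuine obstacle: the argument is essentially bookkeeping, once one has the universal property of $\mathfrak U$ and the two facts already established in section \ref{2}, namely that $\gamma$ is an isomorphism $\CCliff(2n)\cong \End(\Extern\CC^n)$ and that the Lie algebra generators of $V_{2n}\subset\mathfrak{so}(2n+1,\CC)$ hit, under $\pi^{1/2}$, the multiplicative generators $\gamma_\ell$ of the Clifford algebra. The only point worth flagging is that the statement is about the \emph{image} of $\pi^{1/2}$ and not about $\mathfrak U(\mathfrak{so}(2n+1,\CC))$ itself (which is infinite-dimensional); the kernel of $\pi^{1/2}$ is large, but this is irrelevant for the claim.
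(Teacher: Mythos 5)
Your proof is correct and follows the same strategy as the paper, which simply cites the universal property of universal enveloping algebras (Hall, Theorem 9.7) and declares the result follows "at once." You do a bit more honest work: after invoking the universal property to get the algebra homomorphism, you explicitly pin down the image by observing that $\pi^{1/2}(X_{\ell,2n+1})=\tfrac12\gamma_\ell$ puts the multiplicative generators $\gamma_1,\dots,\gamma_{2n}$ of $\gamma(\CCliff(2n))=\End(\Extern\CC^n)$ in the image, giving both inclusions. This is the detail the paper leaves to the reader, and you supply it correctly.
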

\begin{proof}
  This fact follows at once from the universal property of universal enveloping algebras
  (cf. e.g. \cite[Theorem 9.7, p. 247]{hall_lie_2015}).
\end{proof}

\section{Fermions and \texorpdfstring{$C^\infty(\Spin(2n+1))$}{C(Spin(2n+1))}}\label{3}
The proposition \ref{universal rep} expresses the complex Clifford algebra $\CCliff(2n)$
as representation of the universal enveloping algebra of the complex Lie algebra $\mathfrak{so}(2n+1,\CC)$.
In this section, after a short description of the standard representation of the universal enveloping algebra
in terms of certain differential operators acting on a common domain of functions, we describe how to recover from this
(infinite dimensional) representation the $1/2$-spin representation $\pi^{(1/2)}$ of $\mathfrak{so}(2n+1,\CC)$
defined in section \ref{2}.

Consider a connected, simply connected, compact Lie group $\mathbf G$ with Lie algebra $\mathfrak g$.
Let $C^\infty(\mathbf G)$ be the space of complex valued functions on $\mathbf G$.
We denote by $L:\mathbf G\rightarrow\text{End}\left(C^\infty(\mathbf G) \right)$, $L:g\mapsto L_g$,
the action of $\mathbf G$ on $C^\infty(\mathbf G)$ by the \textit{left translation} $L_g$, $g\in G$,
where $L_g f(x)\deq f(g^{-1}x)$, $g,x\in\mathbf G$.
Similarly we denote by $R$ the action of the $\mathbf G$ on $C^\infty(\mathbf G)$ by
the \textit{right translation} $R_g$, $g\in\mathbf G$, where $R_gf(x)\deq f(x g)$, $g,x\in\mathbf G$.

Let us denote by $\mathfrak D(\mathbf G)$ the algebra of differential operators on $C^\infty(\mathbf G)$
generated by the identity and the left-invariant vector fields on $\mathbf G$,
i.e. the vector fields which commute with the left translation.

We have the following important fact
(Cf. \cite[Ch. II, Proposition $1.9$ and its proof, p. 108]{helgason_differential_1979}):
the universal enveloping algebra $\mathfrak U(\mathfrak g)$ is isomorphic
(as a an algebra) to $\mathfrak D(\mathbf G)$.
Moreover, by this isomorphism, the Lie algebra $\mathfrak g$ is represented  on
$C^\infty(\mathbf G)$ 
by the representation $dR:\mathfrak g\rightarrow \mathfrak D(\mathbf G)$
which associates to each element $X\in\mathfrak g$ the corresponding left-invariant vector field $dR(X)$,
where the linear operator $dR(X):C^{\infty}(\mathbf G)\rightarrow C^{\infty}(\mathbf G)$
is defined by
$$
dR(X)f = \frac{d}{dt}R(e^{tX})f|_{t=0},
\quad f\in C^\infty(\mathbf G)
.
$$
\begin{remark}\label{Remark:Clifford-Lie}
  The fact that the universal enveloping algebra is isomorphic (as an algebra)
  to $\mathfrak D(\mathbf G)$ means that the invariant vector fields
  $dR(X_1),$ $\dots,$ $dR(X_n)$ associated with the generators of the Lie algebra 
  $\mathfrak g$
  satisfy the Lie algebra commutation relations, that is\footnote
  {
    We denote by $[XY]$ (no comma)
    the Lie brackets of the Lie algebra $\mathfrak g$
    and by $[A,B]=AB-BA$
    (with comma) the comutator in an associative algebra e.g. $\mathfrak D(\mathbf G)$
    or $\mathfrak U(\mathfrak g)$.
  }
  $[dR(X),dR(Y)]f = dR([XY])f$, $f\in C^\infty(\Spin(2n+1))$, $X,Y\in\mathfrak g$.
  
  Let $\pi^{1/2}=\gamma\circ\phi\circ\kappa$ be the $1/2$-spin representation of $\mathfrak{so}(2n+1,\CC)$
  as given in section \ref{2}.
  With the same notation as in that section we have that
  \begin{equation*}
    \pi^{1/2}(X_{j,2n+1}) = \gamma_j,\quad j=1,\dots,2n
    .
  \end{equation*}
  Note that $\gamma_1,\dots\gamma_{2n}$ satisfy the Lie algebra \textit{commutation} relations
  of $\mathfrak{so}(2n+1)$ and the \textit{anticommutation} relations of the Clifford algebra.
  Now, we can lift any generator $\gamma_j$, $j\in{1,\dots,2n}$, of the Clifford algebra $\CCliff(2n)$ to 
  an invariant (complex) vector field as a differential operator in $\mathfrak D(\mathbf G)$.
  These vector fields will satisfy the commutation
  relations of the Lie algebra $\mathfrak{so}(2n+1,\CC)$
  but \textit{not} the Clifford \textit{anticommutation} relations of the original elements
  $\gamma_1,\dots\gamma_{2n}$.
  To recover the Clifford anticommutation relations we will need to project onto a subspace
  isomorphic to the Fermionic Fock space $\Gamma_\wedge\CC^n$.
  We now turn to the description of this procedure.
\end{remark}
Let $L^2(\mathbf G)$ denote the space of functions from $\mathbf G$ to $\CC$
which are square integrable with respect to the normalized Haar measure $\dd g$ on $\mathbf G$.
By slight abuse of notation we will still denote by $R$ the extension of the representation,
of $\mathbf G$ on $C^\infty(\mathbf G)$
by right translation, to a representation
representation of $\mathbf G$ on $L^2(\mathbf G)$. Note that this extension gives a unitary representation.
We now embed the Fermionic Fock space $\Extern\CC^n$ into $L^2(\Spin(2n+1))$.
\begin{lemma}\label{Lemma 2}
  Let $\pi^{(1/2)}(g)$  denote the 
  $1/2$-spin representation of an element $g\in\Spin(2n+1)$.
  Then the map 
  $$
  F^{1/2}:\Gamma_\wedge\CC^n\hookrightarrow L^2(\Spin(2n+1)),
  \quad F^{1/2}: \psi \mapsto \big( 1, \pi^{(1/2)}(g)\, \psi\big)_{\extp\CC^n}.
  $$
  defines an embedding of the Fermionic Fock space $\Gamma_\wedge\CC^n$ into $L^2(\Spin(2n+1))$.
  Let
  \begin{equation}
    \Psi_0\deq F^{1/2}(1)=\big(1,\pi^{(1/2)}(\cdot)1\big)_{\extp\CC^n},\qquad F_{\Psi_0} \deq \Ran(F^{1/2}),
    \label{F Psi 0}
  \end{equation}
  where $\Ran(F^{1/2})$ denotes the image of $F^{1/2}$.

  The restriction of the right regular representation $R$ of $\Spin(2n+1)$ to $F_{\Psi_0}$
  defines a representation which coincides with the $1/2$-spin representation of $\Spin(2n+1)$.
  Moreover, $F_{\Psi_0} \subset C^\infty(\Spin(2n+1))$ and the restriction
  of $dR$ to $F_{\Psi_0}$ defines a representation of the Lie algebra $\mathfrak{so}(2n+1)$
  which coincides with the $1/2$-spin representation of $\mathfrak{so}(2n+1)$.	
\end{lemma}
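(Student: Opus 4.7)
The plan is to verify three assertions in order: (i) $F^{1/2}$ is a well-defined, injective, $\CC$-linear map whose image lies in $C^\infty(\Spin(2n+1))\cap L^2(\Spin(2n+1))$; (ii) $F^{1/2}$ intertwines $\pi^{(1/2)}$ with the right regular representation $R$, giving the group statement; (iii) the Lie algebra statement follows by differentiating the intertwining at the identity. The only non-formal ingredient needed is the irreducibility of $\pi^{(1/2)}$ already supplied by the preceding corollary.

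For (i), note that $F^{1/2}(\psi)(g)=(1,\pi^{(1/2)}(g)\psi)_{\extp\CC^n}$ is a matrix coefficient of the finite dimensional representation $\pi^{(1/2)}$. It is $\CC$-linear in $\psi$ because the scalar product is linear in its right argument (where $\psi$ appears through $\pi^{(1/2)}(g)\psi$) and conjugate linear only in its left slot, which here carries the fixed vector $1$. Since $g\mapsto\pi^{(1/2)}(g)$ is an analytic group homomorphism of $\Spin(2n+1)$ into $\End(\Extern\CC^n)$, each $F^{1/2}(\psi)$ is smooth, and by compactness of $\Spin(2n+1)$ it automatically lies in $L^2(\Spin(2n+1))$. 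For injectivity, suppose $F^{1/2}(\psi)\equiv 0$ and let $W$ denote the linear span of $\{\pi^{(1/2)}(g)\psi:g\in\Spin(2n+1)\}$. By construction $W$ is $\pi^{(1/2)}(\Spin(2n+1))$-invariant, hence $\pi^{(1/2)}(\mathfrak{so}(2n+1))$-invariant; irreducibility of the latter representation forces $W=\{0\}$ or $W=\Extern\CC^n$. The hypothesis $F^{1/2}(\psi)\equiv 0$ gives $W\perp 1$, ruling out $W=\Extern\CC^n$, so $W=\{0\}$ and $\psi=0$.

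For (ii), the multiplicativity identity $\pi^{(1/2)}(gh)=\pi^{(1/2)}(g)\pi^{(1/2)}(h)$ yields the intertwining
\begin{equation*}
  \bigl(R_h F^{1/2}(\psi)\bigr)(g)=(1,\pi^{(1/2)}(gh)\psi)_{\extp\CC^n}=F^{1/2}\bigl(\pi^{(1/2)}(h)\psi\bigr)(g),
\end{equation*}
so $F_{\Psi_0}=\Ran(F^{1/2})$ is $R$-invariant and the linear isomorphism $F^{1/2}:\Extern\CC^n\to F_{\Psi_0}$ intertwines $\pi^{(1/2)}$ with $R|_{F_{\Psi_0}}$, proving the equivalence of the two group representations. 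Since $F_{\Psi_0}\subset C^\infty(\Spin(2n+1))$ by (i), I would then differentiate the above identity at $h=e$ along the one-parameter subgroup $t\mapsto e^{tX}$, $X\in\mathfrak{so}(2n+1)$, to obtain
\begin{equation*}
  dR(X)F^{1/2}(\psi)=F^{1/2}\bigl(\pi^{(1/2)}(X)\psi\bigr).
\end{equation*}
This shows that $F_{\Psi_0}$ is $dR$-invariant and identifies the restriction of $dR$ to $F_{\Psi_0}$, via $F^{1/2}$, with the $1/2$-spin representation of $\mathfrak{so}(2n+1)$.

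I do not expect a serious obstacle: the only conceptual point is the injectivity in (i), which rests on the irreducibility already established, while the intertwining in (ii) and its differentiated version are direct consequences of group multiplication and differentiation at the identity.
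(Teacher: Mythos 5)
Your proof is correct. The difference from the paper is one of route: the paper invokes the Peter--Weyl theorem to assert that, for the irreducible representation $\alpha=1/2$ and the fixed row index corresponding to the vacuum vector, the matrix coefficients $g\mapsto(f_i,\pi^{(1/2)}(g)f_j)$ span a $2^n$-dimensional subspace of $L^2(\Spin(2n+1))$ that is invariant and irreducible under $R$ and canonically isomorphic to $\Gamma_\wedge\CC^n$; injectivity of $F^{1/2}$ and the identification of $R|_{F_{\Psi_0}}$ then come for free from that theorem. You instead verify everything by hand: the intertwining identity $R_hF^{1/2}(\psi)=F^{1/2}(\pi^{(1/2)}(h)\psi)$ is an immediate computation, injectivity follows from irreducibility of $\pi^{(1/2)}$ (correctly reduced to the Lie algebra irreducibility established in the preceding corollary, using that a group-invariant subspace is algebra-invariant), and the Lie algebra statement is obtained by differentiating the intertwining relation along one-parameter subgroups, which is legitimate since the matrix coefficients are smooth. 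Your version is more self-contained --- it uses only the elementary theory of matrix coefficients rather than the full Peter--Weyl decomposition --- while the paper's version situates $F_{\Psi_0}$ inside the global $L^2$-decomposition, which is conceptually aligned with the later functional-analytic arguments. A minor point in your favour: the paper's last paragraph slips into writing $dL$ where $dR$ is meant; your differentiated intertwining relation for $dR$ is the correct statement.
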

\begin{proof}
  Let
  $$
  Y_{(i)j}^\alpha(x) \deq \sqrt d_{\alpha}\, D_{ij}^\alpha (x)\,, 
  \quad i,j=1,\dots,d_{\alpha}\,
  \quad x\in\mathbf G
  ,
  $$
  where $\alpha$ labels an irreducible unitary representation of $\Spin(2n+1)$,
  $d_\alpha$ denotes the dimension of such a representation,
  and $D_{ij}^\alpha(g)$ the $i,j$-matrix element of $g\in\Spin(2n+1)$ in such representation.
  By Peter-Weyl theorem
  (cf. \cite[Chapter 7 \S2, Theorem 1 p.172 and Theorem 2 p.174]{barut_theory_1986}),
  for $\alpha$, $i=1,\dots,d_\alpha$ fixed, the set of functions $(Y_{(i)j}^\alpha)_{j=1,\dots,d_{\alpha}}$
  spans a subspace of dimension $d_\alpha$ which is invariant and irreducible 
  for the \textit{right} regular representation.
  Now take $\alpha=1/2$, and let $fj$, $j=1,\dots,d_{1/2}=2^n$
  be an orthonormal basis of $\Gamma_\wedge\CC^n$ with $f_1=1$.
  Then $Y^{1/2}_{(i)j} = 2^{-n/2}(f_i,\pi^{(1/2)}(g)f_j)_{\Gamma_\wedge\CC^n}$.
  If we pick $i=1$ then by the Peter-Weyl theorem, as described above,
  the set $(Y_{(1)j}^{1/2})_{j=1,\dots,d_{1/2}}$ spans a subspace $\mathcal H_{1/2}$
  which is isomorphic to $\Gamma_\wedge\CC^n$.
  And the isomorphism is indeed the $F^{1/2}$ in the statement of the theorem.
  It is also clear that the right regular representation on $L^2(\Spin(2n+1))$
  restricts on $\mathcal H_{1/2}$ to a representation isomorphic to the $1/2$-spin representation of $\Spin(2n+1)$.
  
  To prove the last part of the statement first note that any $Y_{(i)j}^{\alpha}$, as defined above,
  is smooth, that is $Y_{(i)j}^{\alpha}\in C^\infty(\Spin(2n+1))$
  (for a sketch of the argument cf. e.g.
  \cite[Part I, Chapter 2, Appendix to section 2.]{gelfand_representations_2012}).
  Hence $dL$ is well defined on $\mathcal H_{1/2}$ which is by definition the image of $F^{1/2}$.
  By definition of $dL$ it is also clear that $dL$, restricted to $\mathcal H_{1/2}$,
  realizes a representation of the Lie algebra $\mathfrak{so}(2n+1)$
  isomorphic to the $1/2$-spin representation. The proof is therefore complete.
\end{proof}
\begin{corollary}
  The representation $dR$ of $\mathfrak{so}(2n+1)$ extends to a representation $dR_{\CC}$
  of $\mathfrak{U}(\mathfrak{so}(2n+1,\CC))$ on $C^\infty(\Spin(2n+1))$,
  where, as before, $\mathfrak{so}(2n+1,\CC)$
  denotes the complexification of the Lie algebra $\mathfrak{so}(2n+1)$.
\end{corollary}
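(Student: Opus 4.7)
The plan is to extend the representation in two steps: first $\CC$-linearly from $\mathfrak{so}(2n+1)$ to its complexification $\mathfrak{so}(2n+1,\CC)$, and then from the Lie algebra to its universal enveloping algebra by invoking the universal property, exactly as was done in Proposition \ref{universal rep} for $\pi^{1/2}$.

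For the first step, I would use that $C^\infty(\Spin(2n+1))$ by convention consists of $\CC$-valued smooth functions, so that the algebra $\mathfrak D(\Spin(2n+1))$ of left-invariant differential operators is naturally a complex associative algebra. Every $Z\in\mathfrak{so}(2n+1,\CC)$ admits a unique decomposition $Z=X+\ii Y$ with $X,Y\in\mathfrak{so}(2n+1)$, and I would define
\begin{equation*}
  dR_\CC(Z) \deq dR(X) + \ii\, dR(Y) \in \mathfrak D(\Spin(2n+1)).
\end{equation*}
This prescription is manifestly $\CC$-linear. To check that it preserves Lie brackets, I would expand $[Z_1 Z_2]$ for $Z_j = X_j + \ii Y_j$ using $\CC$-bilinearity, compute $[dR_\CC(Z_1),dR_\CC(Z_2)]$ via the $\CC$-bilinearity of the commutator on $\mathfrak D(\Spin(2n+1))$, and then compare the four resulting terms using the real Lie algebra homomorphism property of $dR$ already recorded in Remark \ref{Remark:Clifford-Lie}.

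For the second step, I would view $\mathfrak D(\Spin(2n+1))$ as a complex Lie algebra under the commutator. Since $dR_\CC$ is then a complex Lie algebra homomorphism from $\mathfrak{so}(2n+1,\CC)$ into an associative $\CC$-algebra, the universal property of the universal enveloping algebra (applied just as in the proof of Proposition \ref{universal rep}) yields a unique extension to a unital $\CC$-algebra homomorphism
\begin{equation*}
  dR_\CC : \mathfrak U(\mathfrak{so}(2n+1,\CC)) \longrightarrow \mathfrak D(\Spin(2n+1))
  ,
\end{equation*}
which is by construction a representation on $C^\infty(\Spin(2n+1))$.

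There is no substantive obstacle: the only subtlety is to make sure that $\mathfrak D(\Spin(2n+1))$ is treated as a complex, not merely real, algebra of differential operators, which is guaranteed by our standing convention that $C^\infty(\Spin(2n+1))$ denotes $\CC$-valued smooth functions. Consequently the multiplication by $\ii$ appearing in the definition of $dR_\CC$ is meaningful within $\mathfrak D(\Spin(2n+1))$ itself, rather than requiring a separate complexification of the operator algebra.
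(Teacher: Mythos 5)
Your proposal is correct and rests on the same key observation as the paper's own (very terse) proof, namely that $\mathfrak D(\Spin(2n+1))$ acts on $\CC$-valued functions and is therefore already a complex algebra, so complex-linear extension is well defined; the paper simply complexifies $dR$ directly at the level of $\mathfrak U(\mathfrak{so}(2n+1))$, whereas you complexify the Lie algebra representation first and then invoke the universal property, and these two orderings are equivalent. No gap.
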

\begin{proof}
  The representation $dR$ of $\mathfrak{U}(\mathfrak{so}(2n+1))$
  associates to every element $X\in\mathfrak{U}(\mathfrak{so}(2n+1))$ a differential operator
  acting on the complex space $C^\infty(\Spin(2n+1))$.
  Hence the complex-linear extension of $dR$ is well defined on $C^\infty(\Spin(2n+1))$
  and gives a representation $dR_\CC$
  of $\mathfrak{U}(\mathfrak{so}(2n+1,\CC))$ isomorphic to the $1/2$-spin representation.
\end{proof}

\section{Time evolution of a Fermionic state}\label{quasi-Hamiltonian}\label{4}
For quantum mechanical applications it is not enough to consider an algebra of differential operators
on $C^\infty(\Spin(2n+1))$. For example, to discuss the time evolution of the system,
it is also necessary to consider the operators as unbounded operators in the Hilbert space $L^2(\mathbf G)$.
In particular the natural question is whether an operator initially defined on $C^\infty(\mathbf G)$
defines a unique unbounded operator on $L^2(\mathbf G)$.
The main objective of this section is to show that we have a well defined notion of ``quasi-Hamiltonian'',
which lifts the notion of the Hamiltonian for a system of Fermions, to an unbounded,
essentially selfadjoint, positive operator on
$L^2(\Spin(2n+1))$ with domain $C^\infty(\Spin(2n+1))$.
We begin with some general considerations.

Let $\mathfrak g$ be a real semisimple Lie algebra. Let $\theta:\mathfrak g\rightarrow\mathfrak g$
be one of the equivalent Cartan involutions on $\mathfrak g$.
In the case where $\mathfrak g$ is the Lie algebra of a compact semisimple Lie group we take
$\theta$ to be the identity.
Let
\begin{equation}
  X^\ast = -\theta(X),\quad X\in\mathfrak g
  .
  \label{UEA and involution}
\end{equation} 
We extend this involution to
an antilinear involution on $\mathfrak U(\mathfrak g_{\CC})$, where $\mathfrak g_\CC$ is the
complexification of $\mathfrak g$. This operation makes $\mathfrak U(\mathfrak g_{\CC})$ into a $\ast$-algebra.
An element $X\in\mathfrak U(\mathfrak g_\CC)$ is said to be \textit{Hermitian} (as an element of the universal enveloping algebra) when $X=X^\ast$.
  
Consider the algebra $\mathfrak D(\mathbf G)$ of left-invariant smooth differential operators
in $L^2(\mathbf G)$ with common invariant domain $C^\infty(\mathbf G)$
and let $\mathfrak D_\CC(\mathbf G)\deq\mathfrak D(\mathbf G)\otimes_\RR\CC$
denote its complexification.
On $\mathfrak D_\CC(\mathbf G)$ we have an antilinear involution,
which we also denote by $\ast$, which sends the unbounded operator $D\in\mathfrak D_\CC\mathbf G)$ 
to its Hilbert-adjoint $D^\ast$ with respect to the scalar product in $L^2(\mathbf G)$.
This involution makes $\mathfrak D(\mathbf G)$ into a $\ast$-algebra.
Consider the representation $dR$ of the universal enveloping algebra $\mathfrak U(\mathfrak g_\CC)$.
On $C^\infty(\mathbf G)$ we have indeed that $dR(X)^\ast=-dR(X)$, $X\in\mathfrak g$.
But if we consider $dR(X)$ as unbounded operator on $L^2(\mathbf G)$ with domain $C^\infty(\mathbf G)$
then the domain of $dR(X)^\ast$ will in general be larger than the domain of $dR(X)$,
that is, for $X\in\mathfrak{g}$, the operator $dR(\ii X)=\ii dR(X)$ is in general Hermitian but not selfadjoint\footnote
  {
    In the context of unbounded operators in a Hilbert space, an operator $T$ with domain $\Dom(T)$ is \textit{Hermitian} when it satisfies $\Dom(T)\subset\Dom(T^\ast)$
    and $T|_{\Dom(T)}=T^\ast_{\Dom(T)}$.
    The operator $T$ is selfadjoint when in addition the stronger condition $\Dom(T)=\Dom(T^\ast)$ holds.
    In the algebraic context of universal enveloping algebras, 
    an element $X\in\mathfrak U(\mathfrak g_\CC)$ is said to be Hermitian when $X=X^\ast$, where $X^\ast$
    is in the sense of \eqref{UEA and involution}.
    These two, in general different, concepts for an object to be Hermitian coincides
    when we identify the universal enveloping algebra $\mathfrak U(\mathfrak g_\CC)$ 
    with the algebra $\mathfrak D(\mathbf G)$ of smooth right-invariant vector fields  acting on
    $C^\infty(\mathbf G)$.
  }.
Therefore we cannot say that $dR(\ii X)^\ast=dR(\ii X)$ holds when we picture $dR(\ii X)$ as unbounded operators
on $L^2(\mathbf G)$ with domain $C^\infty(\mathbf G)$.
One could try to extend the operator $dR(\ii X)$ to a selfadjoint operator by enlarging its domain.
This might be possible for one operator $dR(X)$ for a fixed $X\in\mathfrak U(\mathfrak g)$.
But, since different $X,Y\in\mathfrak U(\mathfrak g)$ are elements of an \textit{algebra} of operators,
we need to have a \textit{common} invariant domain of definition for both $dR(X)$ and $dR(Y)$.
Hence, in general one cannot expect to find an extension of $dR$
which sends Hermitian elements of $\mathfrak g_\CC$ to self-adjoint operators in $L^2(\mathbf G)$
with common domain of selfadjointness.
One could argue that this requirement is too strong and not necessarily the most natural.
Perhaps a more natural situation, which is obtained in the context of compact semisimple Lie groups,
is the following
(cf. e.g. \cite[Corollary 10.2.10, p.270]{schmudgen_unbounded_1990}).	
Let $\mathbf G$ be a \textit{compact} Lie group with Lie algebra $\mathfrak g$.
Then
\begin{equation}
  \overline{dR(X^\ast)} = dR(X)^\ast,\quad X\in\mathfrak U(\mathfrak g_\CC),
  \label{eq:essential star iso}
\end{equation}
where the over-line on the left hand side denotes the operator closure.
Note that this property implies in particular that any Hermitian element $D_\CC\in\mathfrak D(\mathbf G)$
is automatically \textit{essentially selfadjoint}\footnote
{
  That is, it admits a unique extension to a selfadjoint operator.
}.

We now turn to the notion of commuting unbounded operators.
There are two natural notions of commuting unbounded operators, weakly commuting and strongly commuting.
We give the precise definitions.

Given two unbounded operators $A,B$ with common domain $\mathcal D$ in a Hilbert space $\mathfrak H$,
we say that $A,B$ \textit{weakly commute} on $\mathcal D$ when $ABv=BAv$ for all $v\in\mathcal D$.
Given two \textit{selfadjoint} unbounded operators $A,B$ we say that $A,B$ \textit{strongly commute}
when $e^{\ii t A}e^{\ii s B} = e^{\ii s B} e^{\ii t A}$ for all $s,t\in\RR$, where $e^{\ii tC}$
denotes the unitary one-parameter group generated by a selfadjoint operator $C$ (cf. \cite[Theorem VIII.13]{reed_i_1981}
for a justification of this definition).

Regarding the relation between strong and weak commutativity of operators on a Hilbert space we have the following result due to Nelson.
\begin{lemma}[{\cite[Corollary 9.2]{nelson_analytic_1959}}]\label{Nelson}
	Let $A,B$ be two Hermitian unbounded operators on a Hilbert space
	$\mathcal H$ and let $\mathcal Q$ be a dense linear subspace of $\mathcal H$ such that $\mathcal Q$ is contained in the domain of $A$,
	$B$, $A^2$, $AB$, $BA$, and $B^2$, and such that 
	$A,B$ \textit{weakly commute} on $\mathcal Q$.
	If the restriction of $A^2+B^2$ to $\mathcal Q$ is essentially selfadjoint then $A$ and $B$ are essentially selfadjoint and
	their closures $\overline A$, $\overline B$ \textit{strongly commute}.\qed
\end{lemma}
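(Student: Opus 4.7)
The plan is to reduce everything to Nelson's analytic vector theorem. Essential selfadjointness of $C\deq A^2+B^2$ on $\mathcal Q$ yields a positive selfadjoint closure $\overline C$ (positivity follows from $(C\psi,\psi)=\|A\psi\|^2+\|B\psi\|^2\ge 0$ on $\mathcal Q$). The strategy is to produce a dense subspace $\mathcal D\subset\mathcal H$ consisting of vectors that are analytic simultaneously for $A$, $B$ and $\overline C$: Nelson's theorem then gives essential selfadjointness of $A$ and $B$, while the same dense subspace together with the weak commutativity hypothesis will force the unitary groups $e^{\ii t\overline A}$ and $e^{\ii s\overline B}$ to commute.

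The central ingredient is the inequality
\begin{equation*}
  \|A^n\psi\|^2+\|B^n\psi\|^2\le(C^n\psi,\psi)\le\|C^n\psi\|\,\|\psi\|,\qquad n\ge 1,
\end{equation*}
for $\psi$ in an appropriate common invariant domain. I would establish it by expanding $C^n=(A^2+B^2)^n$ via the commutative binomial theorem---legitimate because the weak commutativity of $A$ and $B$ propagates to weak commutativity of $A^2$ and $B^2$ on iterates of $\mathcal Q$---and then pairing with $\psi$ and using Hermiticity to rewrite $(A^{2k}B^{2(n-k)}\psi,\psi)=\|A^kB^{n-k}\psi\|^2$; discarding all but the $k=n$ and $k=0$ terms gives the claim. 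Next I would take $\mathcal D$ to be a dense set of analytic vectors for $\overline C$ on which an exponential bound $\|\overline C^n\phi\|\le MR^n n!$ holds (for instance a spectral-projection range $E([0,N])\mathcal H$, on which $\overline C$ acts boundedly). Combined with Cauchy--Schwarz, the inequality above then yields $\sum_n (t^n/n!)\|A^n\phi\|<\infty$ for all $t>0$, so every $\phi\in\mathcal D$ is an entire analytic vector for $A$, and symmetrically for $B$. Nelson's analytic vector theorem immediately upgrades $A$ and $B$ to essentially selfadjoint operators.

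For strong commutativity of $\overline A$ and $\overline B$ I would take $\phi\in\mathcal D$ and expand, for $|t|,|s|$ small enough,
\begin{equation*}
  e^{\ii t\overline A}e^{\ii s\overline B}\phi=\sum_{m,n\ge 0}\frac{(\ii t)^m(\ii s)^n}{m!\,n!}A^mB^n\phi,
\end{equation*}
the double series converging absolutely by the bounds above. The iterated weak commutativity $A^mB^n\phi=B^nA^m\phi$ then renders the double series independent of the order in which the two unitaries are applied, so $e^{\ii t\overline A}e^{\ii s\overline B}=e^{\ii s\overline B}e^{\ii t\overline A}$ on a dense set for small $s,t$, and this extends to all $s,t\in\RR$ by unitarity and the one-parameter group property.

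The principal obstacle is domain bookkeeping: $\mathcal Q$ is assumed stable only under the six listed operators $A,B,A^2,AB,BA,B^2$, whereas both the binomial expansion and the iterated commutation of $A$ and $B$ require a common domain invariant under arbitrary monomials in $A$ and $B$. I would overcome this by transferring the analysis from $\mathcal Q$ to the smooth-vector domain $\mathcal D^\infty(\overline C)\deq\bigcap_n\mathrm{Dom}(\overline C^n)$: using the $n=1$ case of the inequality on $\mathcal Q$ together with closability of $A$ and $B$ and the density of $\mathcal Q$ in $\mathcal D^\infty(\overline C)$ with respect to the graph norm of powers of $\overline C$ (which holds because $\mathcal Q$ is a core for $\overline C$), one shows inductively that $\mathcal D^\infty(\overline C)\subset\mathrm{Dom}(\overline A^n)\cap\mathrm{Dom}(\overline B^n)$ for every $n$ and that both the inequality and the weak commutation relations persist there by a closure argument.
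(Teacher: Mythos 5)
The paper offers no proof of this lemma at all: it is quoted verbatim as Corollary 9.2 of Nelson's \emph{Analytic Vectors} and closed with \verb|\qed|. So your attempt must be judged on its own merits, and while the overall strategy (manufacture a dense set of joint analytic vectors out of the essential selfadjointness of $C=A^2+B^2$, then invoke the analytic vector theorem and a power-series commutation argument) is the morally correct one, the execution has a genuine gap precisely at the point you flag as the ``principal obstacle''. Your central inequality $\|A^n\psi\|^2+\|B^n\psi\|^2\le (C^n\psi,\psi)$ and the iterated relations $A^mB^n\phi=B^nA^m\phi$ require a common domain invariant under arbitrary monomials in $A$ and $B$; the hypotheses give you neither invariance of $\mathcal Q$ (it is only contained in the domains of the six listed operators, so $A^3\psi$ need not even be defined for $\psi\in\mathcal Q$) nor any control of higher powers. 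Your proposed repair is to pass to $\mathcal D^\infty(\overline C)$ using ``density of $\mathcal Q$ in $\mathcal D^\infty(\overline C)$ with respect to the graph norm of powers of $\overline C$, which holds because $\mathcal Q$ is a core for $\overline C$''. That inference is false: a core for $\overline C$ is in general \emph{not} a core for $\overline C^{\,n}$, $n\ge 2$, so the closure argument that is supposed to propagate the inequality and the commutation relations to $\mathcal D^\infty(\overline C)$ (and hence to the spectral subspaces $E([0,N])\mathcal H$) does not go through. The same gap undermines the final step, since you need $\mathcal D\subset\Dom(\overline{A|_{\mathcal Q}})$ before Nelson's analytic vector theorem applies to the operator the lemma is actually about. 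This is not a cosmetic issue: Nelson's own counterexample in the same paper shows that weak commutativity on a dense invariant domain does not imply strong commutativity, so the passage from data on $\mathcal Q$ to statements about the unitary groups is exactly where the work lies.

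For comparison, Nelson's actual proof never uses powers of $A$ and $B$ higher than those appearing in the hypotheses. It rests on two first-order ingredients available on $\mathcal Q$ itself: the estimate $\|A\psi\|^2+\|B\psi\|^2=(C\psi,\psi)$ and the vanishing (hence trivially bounded) commutator of $A$ and $B$ with $C$ in the weak sense; these feed into his Lemma 9.1, where the semigroup $e^{-t\overline C}$ is used to regularize: vectors $e^{-t\overline C}\psi$ are analytic for $\overline C$ and, via the domination estimates, are shown to lie in $\Dom(\overline A)\cap\Dom(\overline B)$ and to be analytic there, with the commutation relations surviving the regularization. If you want to salvage your argument, you should replace the binomial expansion of $C^n$ and the appeal to cores of $\overline C^{\,n}$ by this semigroup regularization, or else strengthen the hypotheses to include invariance of $\mathcal Q$ under $A$ and $B$ (which is how the lemma is actually applied in the paper, where everything lives on $C^\infty(\mathbf G)$).
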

A direct consequence of this lemma are the following facts, which will be used in this section and the following.
\begin{proposition}\label{strongly commuting}
  Let $\mathbf G$ be a compact Lie group with Lie algebra $\mathfrak g$.
  Let $\mathfrak g_\CC$ be the complexified Lie algebra of $\mathfrak g$, and
  $\mathfrak U(\mathfrak g_\CC)$ its universal enveloping algebra.
  Let $X,Y\in \mathfrak U(\mathfrak g)$ be two \textit{commuting} operators
  (in the algebraic sense of elements in the universal enveloping algebra).
  Then 
  \begin{enumerate} 
  \item the closed operators $\overline{dR(X)},\overline{dR(Y)}\in\mathfrak D(\mathbf G)$ \textit{strongly commute};
  \item if $dR(X)$ is positive (semi-)definite, and $dR(Y)$ is Hermitian,
    then
    $$
    \exp(-\overline{dR(X)}) \exp(\ii \overline{dR(Y)}) = \exp(\ii \overline{dR(Y)}) \exp(-\overline{dR(X)})
    ,
    $$
    where we recall that $\overline{dR(X)}$ and $\overline{dR(Y)}$
    are the \textit{unique} closed extensions of $dR(x)$, respectively $dR(Y)$,
    and $\overline{dR(X)}>0$.
  \end{enumerate}
\end{proposition}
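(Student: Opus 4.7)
The plan is to reduce both parts of the proposition to an application of Nelson's Lemma \ref{Nelson}, using the compact-group essential selfadjointness property \eqref{eq:essential star iso} to supply its hypotheses. First I would pick the common invariant domain $\mathcal{Q}=C^\infty(\mathbf G)$, which is dense in $L^2(\mathbf G)$ and, being the common invariant domain for the whole algebra $\mathfrak D(\mathbf G)\cong\mathfrak U(\mathfrak g_\CC)$, is contained in the domain of each of $dR(X)$, $dR(Y)$, $dR(X)^2=dR(X^2)$, $dR(Y)^2=dR(Y^2)$, $dR(X)dR(Y)=dR(XY)$ and $dR(Y)dR(X)=dR(YX)$. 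Here I am using that $dR$ is an algebra homomorphism from $\mathfrak U(\mathfrak g_\CC)$ onto $\mathfrak D(\mathbf G)$, so products in the UEA correspond to compositions of differential operators on $C^\infty(\mathbf G)$.

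The weak commutativity hypothesis of Nelson's Lemma is then immediate: from $XY=YX$ in $\mathfrak U(\mathfrak g_\CC)$ and the homomorphism property one gets $dR(X)dR(Y)f=dR(XY)f=dR(YX)f=dR(Y)dR(X)f$ for every $f\in C^\infty(\mathbf G)$. The nontrivial hypothesis is essential selfadjointness of $dR(X)^2+dR(Y)^2=dR(X^2+Y^2)$ on $\mathcal Q$. Interpreting the statement so that $X,Y$ are Hermitian elements of $\mathfrak U(\mathfrak g_\CC)$ (which is the relevant case for part (2), and the only case in which strong commutativity of the closures makes sense), one has $(X^2+Y^2)^\ast=(X^\ast)^2+(Y^\ast)^2=X^2+Y^2$, so $X^2+Y^2$ is Hermitian in the UEA. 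Compactness of $\mathbf G$ together with \eqref{eq:essential star iso} then yields that $dR(X^2+Y^2)$ is essentially selfadjoint on $C^\infty(\mathbf G)$. Nelson's lemma now applies and delivers the strong commutativity of $\overline{dR(X)}$ and $\overline{dR(Y)}$, which proves part (1).

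For part (2), positivity of $dR(X)$ implies in particular that $X$ is Hermitian in $\mathfrak U(\mathfrak g_\CC)$ (equivalently, $dR(X)$ is symmetric on $C^\infty(\mathbf G)$), so by \eqref{eq:essential star iso} its closure $\overline{dR(X)}$ is selfadjoint and non-negative, and likewise $\overline{dR(Y)}$ is selfadjoint. By part (1) these two selfadjoint operators strongly commute. I would then invoke the joint spectral theorem for strongly commuting selfadjoint operators (equivalently, the definition of strong commutativity via commuting one-parameter unitary groups): it entails that bounded Borel functions of $\overline{dR(X)}$ and $\overline{dR(Y)}$ commute. Applying this to $e^{-\lambda}$, which is bounded on $[0,\infty)\supset\sigma(\overline{dR(X)})$, and to $e^{\ii\mu}$, which is bounded on $\RR\supset\sigma(\overline{dR(Y)})$, yields the claimed identity $\exp(-\overline{dR(X)})\exp(\ii\overline{dR(Y)})=\exp(\ii\overline{dR(Y)})\exp(-\overline{dR(X)})$.

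The main obstacle is really the domain question behind essential selfadjointness of $dR(X^2+Y^2)$: weak operator commutation on a dense core does not in general upgrade to strong commutation of the closures, and one must feed Nelson's lemma an essentially selfadjoint combination $A^2+B^2$. In our setting this obstacle dissolves because compactness of $\mathbf G$ promotes every Hermitian element of $\mathfrak U(\mathfrak g_\CC)$ to an essentially selfadjoint operator with core $C^\infty(\mathbf G)$ via \eqref{eq:essential star iso}; once this is noted, the rest of the argument is purely mechanical bookkeeping between the algebraic $\ast$-structure on $\mathfrak U(\mathfrak g_\CC)$ and the operator-theoretic $\ast$-structure on $\mathfrak D(\mathbf G)$.
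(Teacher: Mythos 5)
Your proposal is correct and follows essentially the same route as the paper: weak commutativity on $C^\infty(\mathbf G)$ from the homomorphism property of $dR$, essential selfadjointness of the relevant Hermitian elements via \eqref{eq:essential star iso} to feed Nelson's Lemma \ref{Nelson} for part (1), and spectral calculus for part (2). If anything, you are slightly more careful than the paper in isolating essential selfadjointness of $dR(X^2+Y^2)$ (rather than of the individual products) as the hypothesis Nelson's lemma actually requires, and in noting that $X,Y$ must be read as Hermitian for strong commutativity to be meaningful.
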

\begin{proof}
  The statement of point \textit{1} follows from  \eqref{eq:essential star iso} and Nelson's Lemma  \ref{Nelson}.
  Indeed, if $X,Y$ commute in the universal enveloping algebra
  then $dR(X)$ and $dR(Y)$ weakly commute on $C^\infty(\mathbf G)$ because $dR$ is a representation of
  $\mathfrak U(\mathfrak g)$ with domain $C^\infty(\mathbf G)$.
  Now, for $\mathbf G$ a compact group, equation \eqref{eq:essential star iso}
  tells us that \textit{any} Hermitian element in the algebra $\mathfrak D(\mathbf G)$ is essentially self adjoint on
  $C^\infty(\mathbf G)\subset L^2(\mathbf G)$. 
  Therefore in particular, for any $X,Y\in\mathfrak U(\mathfrak g_\CC)$, we have that the operators
  $dR(X)$, $dR(X)^2=dR(X^2)$, $dR(X)dR(Y)=dR(XY)$, $dR(X)+dR(Y)=dR(X+Y)$
  have the same domain $C^\infty(\mathbf G)$, and are essentially selfadjoint there.
  Hence the hypothesis of the Lemma \ref{Nelson} are satisfied with
  $A=dR(X)$ and $B=dR(Y)$ and statement \textit{1} follows.
  The statement of point \textit{2} is a straightforward application of spectral calculus
  (cf \cite[Section VIII.5]{reed_i_1981}).
\end{proof}
\begin{remark}\label{remark:commutation}
  Because of the above proposition we only need to check whether two operators commute
  as elements of the universal enveloping algebra. From the Proposition \ref{strongly commuting} 
  it then follows automatically that
  their closures are \textit{selfadjoint} and \textit{strongly commuting}.
\end{remark}

With this proposition we have completed the considerations from the general theory.
We can now turn to the application that we have in mind.
\begin{definition}[Quasi-Fermionic vector fields]\label{quasi fields}
  Let $X_{ij}$ ,$i,j=1,\dots,2n+1$, be the standard basis (cf. \eqref{Lie brakets})
  of the Lie algebra $\mathfrak{so}(2n+1)$ of
  the Lie group $\Spin(2n+1)$.
  Let us denote by $D_{ij} \deq dR(X_{ij})$ the corresponding left-invariant vector fields on $\Spin(2n+1)$.
  We define the following operators (cf. \eqref{eq:4})
  \begin{align*}
    D_{k}^+ &\deq D_{2k-1,2n+1} + \ii D_{2k,2n+1}\,, \\
    D_{k}^- &\deq -D_{2k-1,2n+1} + \ii D_{2k,2n+1}\,,\qquad k=1,\dots,n\,
              ,
  \end{align*}
  as linear operators on $C^\infty(\Spin(2n+1),\CC)\subset L^2(\Spin(2n+1))$.
\end{definition}
\begin{definition}[Quasi-Hamiltonian operator]\label{Quasi-Hamiltonian}
  Let us fix $n$ strictly positive numbers 
  $E_1,\dots,E_n$, with $0<E_1\le\dots\le E_n$.
  Using for $D^{\pm}_k$ the notation of the previous paragraph we call a
  \textit{quasi}-\textit{Hamiltonian} the operator
  $$
  \tilde{H} = \sum_{k=1}^n E_k D^+_k D^-_k\,
  ,
  $$
  acting on $C^\infty(\Spin(2n+1))$.
\end{definition}
\begin{remark}
  The operators $D^\pm_k$
  restricted to the finite dimensional subspace $F_{\Psi_0}\subset C^\infty(\Spin(2n+1))$,
  given in \eqref{F Psi 0} of \ref{Lemma 2},
  satisfy the canonical anticommutation relations.
  For this reason we call these operators ``quasi-Fermionic''.
  Similarly we named the operator $\tilde{H}$ ``quasi-Hamiltonian'' because, restricted to the subspace $F_{\Psi_0}\cong\Extern\CC^n$,
  it coincides with the \textit{free Fermionic Hamiltonian operator}
  $H \deq \sum_k E_k c_k^\dag c_k$, where the creation-annihilation operators $c^\dagger_k,c_k$, $k=1,\dots,n$,
  were defined in section~\ref{2}.
\end{remark}
\begin{theorem}\label{Theorem:quasi Hamiltonian}
  The unbounded operator $\tilde{H}$ with domain $C^\infty(\Spin(2n+1))$ in 
  $L^2(\Spin(2n+1))$ is a positive, essentially selfadjoint operator.
  Moreover the quasi-Hamiltonian can be decomposed on $C^\infty(\Spin(2n+1))$ as
  $$
  \tilde{H} =  P_0 +\ii B_0 ,\qquad P_0 \deq -\sum_{k=1}^n E_k L_k ,\quad B_0\deq \sum_{k=1}^n E_k T_k ,
  $$
  with $T_k\deq D_{2k-1,2k}$, $L_k\deq D_{2k-1,2n+1}^2+D_{2k,2n+1}^2$, $k=1,\dots,n$,
  and the following properties are satisfied:
  \begin{enumerate}
  \item The operator $P_0$ and $\ii B_0$,
    with domains $\Dom(P_0)$, $\Dom(\ii B_0)$ both equal to $C^\infty(\Spin(2n+1))$,
    are essentially selfadjoint in $L^2(\Spin(2n+1))$.
    Moreover $P_0$ is positive definite.
    In particular $-P_0$ and $B_0$ are closable and their closures $-\overline{P_0}, \overline{B_0}$
    are selfadjoint operators which generate respectively a semigroup and a unitary group
    (we consider $-P_0$
    because generators of semigroups are usually taken to be negative definite).
  \item The operators $\overline{P_0}$ and $\overline{\ii B_0}$ strongly commute.
    The operators $T_k, L_k$, $k=1,\dots,n$, are essentially selfadjoint. The unique selfadjoint closures
    $\overline{T_k}$, $k=1,\dots,n$ define a family of strongly commuting unbounded operators.
    Moreover each $\overline{T_k}$, $k=1,\dots,n$ strongly commutes with each $\overline{L_\ell}$, $\ell=1,\dots,n$.
    In particular $\overline{\ii B_0}$ strongly commutes with $\overline{P_0}$.
  \end{enumerate}
\end{theorem}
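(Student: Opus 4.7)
\emph{Decomposition and positivity.} My first step is to reduce the theorem to algebraic identities in $\mathfrak{U}(\mathfrak{so}(2n+1,\CC))$. Expanding
$D_k^+ D_k^- = (D_{2k-1,2n+1} + \ii D_{2k,2n+1})(-D_{2k-1,2n+1} + \ii D_{2k,2n+1})$
gives $-L_k + \ii[D_{2k-1,2n+1}, D_{2k,2n+1}]$, and \eqref{Lie brakets} identifies the remaining commutator with a multiple of $T_k$, establishing $\tilde H = P_0 + \ii B_0$. Because $\Spin(2n+1)$ is compact, \eqref{UEA and involution} gives $X_{ij}^\ast = -X_{ij}$ and hence $D_{ij}^\ast = -D_{ij}$ on $C^\infty(\Spin(2n+1))$; a componentwise computation then yields $(D_k^-)^\ast = D_k^+$, so $\tilde H = \sum_k E_k (D_k^-)^\ast D_k^- \geq 0$ as a symmetric operator on $C^\infty(\Spin(2n+1))$. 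The same anti-Hermiticity shows $\langle f, L_k f\rangle = -\|D_{2k-1,2n+1}f\|^2 - \|D_{2k,2n+1}f\|^2 \leq 0$, so $P_0 \geq 0$, while $\ii T_k = \ii D_{2k-1,2k}$ is Hermitian, hence $\ii B_0$ is Hermitian as well.

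\emph{Essential selfadjointness.} At this point $\tilde H$, $P_0$, $\ii B_0$, and each of $L_k$ and $\ii T_k$ are Hermitian elements of $\mathfrak{D}_\CC(\Spin(2n+1))$ with common invariant domain $C^\infty(\Spin(2n+1))$. Compactness of $\Spin(2n+1)$ and \eqref{eq:essential star iso} then deliver essential selfadjointness of all of them simultaneously; the spectral theorem gives the contractive semigroup generated by $-\overline{P_0}$ (contractivity from the positivity just shown) and the unitary group generated by $\overline{B_0}$, completing item $1$.

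\emph{Strong commutativity.} For item $2$, I would invoke Proposition \ref{strongly commuting} and Remark \ref{remark:commutation} to reduce each strong commutativity assertion to algebraic commutativity in $\mathfrak{U}(\mathfrak{so}(2n+1,\CC))$, which I then verify by direct application of \eqref{Lie brakets}. Explicitly, $[T_k, T_\ell] = 0$ for all $k,\ell$ since the index pairs $\{2k-1,2k\}$ and $\{2\ell-1, 2\ell\}$ are disjoint when $k \neq \ell$; $[T_k, L_\ell] = 0$ trivially for $k \neq \ell$, while for $k = \ell$ the brackets $[T_k, X_{2k-1, 2n+1}]$ and $[T_k, X_{2k, 2n+1}]$ are each $\pm$ the \emph{other} generator entering $L_k$, so a Leibniz expansion shows that $[T_k, L_k]$ reduces to an anticommutator and its negative, which cancel. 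Consequently $[P_0, \ii B_0] = 0$ in the universal enveloping algebra, and one last application of Proposition \ref{strongly commuting} yields the strong commutativity of $\overline{P_0}$ and $\overline{\ii B_0}$. The only genuine obstacle is careful sign tracking in the Lie bracket computations; the compactness of $\Spin(2n+1)$, packaged in \eqref{eq:essential star iso} and Proposition \ref{strongly commuting}, converts every analytic question into a purely algebraic check, so no deep analytic difficulty remains.
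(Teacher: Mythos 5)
Your proposal is correct and follows essentially the same route as the paper: expand $D_k^+D_k^-$ via \eqref{Lie brakets} to get the $P_0+\ii B_0$ decomposition, use compactness of $\Spin(2n+1)$ and \eqref{eq:essential star iso} for essential selfadjointness, positivity from $(D_k^-)^\ast=D_k^+$, and reduce all strong-commutativity claims to algebraic identities in the universal enveloping algebra via Proposition \ref{strongly commuting}, with the $k=\ell$ cancellation in $[T_k,L_k]$ handled exactly as in the paper's explicit computation. The only cosmetic difference is that you verify $[T_k,T_\ell]=0$ by disjointness of index pairs where the paper appeals to the Cartan subalgebra; these are the same observation.
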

\begin{proof}
  First note that $\tilde{H}$ is well defined on $C^\infty(\Spin(2n+1))$,
  since $D^\pm$ are linear combinations of smooth vector fields, in particular $D^-$ maps
  $C^\infty(\Spin(2n+1))$ into $C^\infty(\Spin(2n+1))$
  (indeed $\mathfrak D(\Spin(2n+1)$ is an algebra).
  Using the above definition of the operators $D_k^+,D_k^-$ in terms of the operators $D_{ij}$  in
  Definition \ref{quasi fields} we have, on $C^\infty(\Spin(2n+1))$,
  \begin{align*}
    &\tilde{H} = -\sum_{k=1}^n E_k(D_{2k-1,2n+1} + \ii D_{2k,2n+1})(D_{2k-1,2n+1} - \ii D_{2k,2n+1})\\
    &= -\sum_{k=1}^n E_k(D_{2k-1,2n+1})^2 - \!\sum_{k=1}^n E_k(D_{2k,2n+1})^2
                - \!\ii\sum_{k=1}^n E_k\,[D_{2k-1,2n+1},D_{2k,2n+1}] \\
              &= -\sum_{k=1}^{n} E_k 
                \left( (D_{2k-1,2n+1})^2 + (D_{2k,2n+1})^2 \right) - \ii\sum_{k=1}^n E_k D_{2k-1,2k}
	.
  \end{align*}
  Therefore we obtain
  \begin{equation}\label{eq:Hamiltonian E prime}
    \tilde{H} = P_0 + \ii B_0
  \end{equation}
  where, $P_0$ and $\ii B_0$ are defined in the statement of the theorem.
  Note that by \eqref{eq:essential star iso} all Hermitian operators we are handling are essentially selfadjoint.
  Moreover the operators $D_k^+D_k^-$, $k=1,\dots,n$, are positive definite since $D_k^+$
  is the formal adjoint of $D_k^-$.
  This implies that the quasi-Hamiltonian $\tilde{H}$ is  essentially selfadjoint
  (by \eqref{eq:essential star iso}) and is positive definite (since is the sum of positive definite operators).
  This concludes the proof of the first part of the theorem.

  Property \textit{1} is proved by a similar argument.
  The fact that $P_0$ is closable and its closure defines a semigroup follows from the fact that
  $P_0$ is essentially selfadjoint (therefore closable) and positive definite (hence defines a semigroup).
  Similarly $B_0$ is closable because $\ii B_0$ is essentially selfadjoint and therefore defines a unitary
  one-parameter group.

  We now turn to the proof of property \textit{2}.
  By \eqref{eq:essential star iso} we obtain that $\ii T_k=\ii D_{2k-1,2k}$ and 
  $L_k$ are self-adjoint. 
  Note that, by remark \ref{remark:commutation} if two elements $X,Y$
  in the universal enveloping algebra $\mathfrak{U}(\mathfrak{so}(2n+1))$ commute,
  then their representation $dR(X)$, $dR(X)$ admit closures $\overline{dR(X)}$, $\overline{dR(Y)}$
  which strongly commute.
  Hence to prove the commutation properties of point \textit{2},
  it is enough to perform the computation on the universal enveloping algebra.

  Now, from the fact
  that the elements $X_{2k-1,2k}$, $k=1,\dots,n$
  generate a maximal commutative subalgebra (Cartan subalgebra)\footnote
  {
    Cf. remark \ref{remark: lowest weight}
  } of the Lie algebra 
  $\mathfrak{so}(2n+1)$ we obtain that
  the operators $T_k= D_{2k-1,2k} = dR(X_{2k-1,2k})$, $k=1,\dots,n$ form a commuting family of operators.
  
  As above, by remark \ref{remark:commutation},  to show that 
  $L_\ell$ commutes with $X_{2k-1,2k}$,
  for all $\ell,k\in \{1,\dots,n\}$,
  it is enough to prove that the corresponding elements of the universal enveloping algebra commute.
  Consider
  $$
  L_\ell^{\mathfrak U} \deq (X_{2\ell-1,2n+1})^2 + (X_{2\ell,2n+1})^2,
  \quad
  \ell=1,\dots,n
  ,
  $$
  be the element, associated to $L_k$, in the universal enveloping algebra of $\mathfrak{so}(2n+1)$.
  It is enough to prove that 
  \begin{equation}
  [L^{\mathfrak U}_\ell, X_{2k-1,2k}]=0, \quad\text{for all }\ell,k=1,\dots,n
  .\label{eq:1}
  \end{equation}
  This follows from the following straightforward computations.
  Using the identity $[X^2,Y]=X[X,Y]+[X,Y]X$ for any $X,Y\in \mathfrak U(\mathfrak{so}(2n+1)_\CC)$ 
  we get
  \begin{multline}
    [L^{\mathfrak U}_\ell,X_{2k-1,2k}] =\\
    = X_{2\ell-1,2n+1}[X_{2\ell-1,2n+1}, X_{2k-1,2k}] 
    + [X_{2\ell-1,2n+1}, X_{2k-1,2k}]X_{2\ell-1,2n+1} +\\
    +  X_{2\ell,2n+1}[X_{2\ell,2n+1}, X_{2k-1,2k}] 
    + [X_{2\ell,2n+1}, X_{2k-1,2k}]X_{2\ell,2n+1}
    .
  \end{multline}
  Using in this expression the commutation relations \eqref{eq:4} of the standard basis of $\mathfrak{so}(2n+1)$, 
  we obtain for $\ell,k=1,\dots,n$
  \begin{multline*}
    [L^{\mathfrak U}_\ell,X_{2k-1,2k}] =\\
    = -X_{2\ell-1,2n-1} \,\delta_{2\ell-1,2k-1}X_{2n+1,2k}
      -\delta_{2\ell-1,2k-1}X_{2n+1,2k} \,X_{2\ell-1,2n+1} +\\
    + X_{2\ell,2n+1}\, \delta_{2\ell,2k}X_{2n+1,2k-1} +
      \delta_{2\ell,2k} X_{2n+1,2k-1} \,X_{2\ell,2n+1}
      .
  \end{multline*}
  Now, using in this expression the fact that $X_{ij}=-X_{ji}$ for all $1\le i<j\le 2n+1$, and 
  collecting the Kronecker deltas into a unique Kronecker delta which multiplies everything, we get
  \begin{multline*}
    [L^{\mathfrak U}_\ell,X_{2k-1,2k}] 
    = \delta_{k,\ell} \, (X_{2\ell-1,2n+1} \,X_{2k,2n+1}
      + X_{2k,2n+1} \,X_{2\ell-1,2n+1}\\
    - X_{2\ell,2n+1}\,X_{2k-1,2n+1} -
      X_{2k-1,2n+1} \,X_{2\ell,2n+1} )
      .
  \end{multline*}
  Finally, using the identity $\delta_{ij}f(i,j)=\delta_{ij}f(i,i)$ where $f(i,j)$ 
  is any function of $i,j\in\NN$,
  we get
  \begin{align*}
    [L^{\mathfrak U}_\ell,X_{2k-1,2k}]
    &= \delta_{k,\ell} \,
      (X_{2k-1,2n+1} \,X_{2k,2n+1}
      + X_{2k,2n+1} \,X_{2k-1,2n+1}\\
    &\qquad\qquad\quad- X_{2k,2n+1}\,X_{2k-1,2n+1} -
      X_{2k-1,2n+1} \,X_{2k,2n+1} )\\
    &=0
      .
  \end{align*}
  This proves \eqref{eq:1}. 
  As a consequence it is now clear that
  $P_0$ commutes with $B_0$ which concludes the proof of property \textit{2} and of the theorem.
\end{proof}

\section{Relation with stochastic processes}\label{5}
From theorem \ref{Theorem:quasi Hamiltonian} we have that the quasi-Hamiltonian is
$$
\tilde{H} = P_0 + \ii B_0,
$$
with $P_0\deq\sum_{k=1}^n E_k L_k$ and $B_0\deq\sum_{k=1}^n E_k T_k$, where all the operators
are defined on $C^\infty(\Spin(2n+1))$.

Since the operator $B_0$ appears in $\tilde{H}$ multiplied by the imaginary unit $\ii$
we cannot associate directly to the closure $\overline{\tilde{H}}$ a (real) stochastic process
on $\Spin(2n+1)$.
For this reason we consider, together with $P_0$, $B_0$, and $\tilde{H}$ above,
the following operator
\begin{equation}
  P \deq P_0 + B_0,
  \quad \Dom(P)\deq C^\infty(\Spin(2n+1))
  .\label{P0,P}
\end{equation}

It turns out that it is possible to associate a stochastic diffusion processes on $\Spin(2n+1)$ to both closures
$\overline{P_0}$ and $\overline{P}$ in $L^2(\Spin(2n+1)$.
First we see that both $-\overline{P_0}$ and $-\overline{P}$ generate probability semigroups in the following sense.
\begin{lemma}
  The operators $-P$, respectively $-P_0$ are essentially selfadjoint on $C^\infty(\Spin(2n+1))$
  and their closures $-\overline P$, $-\overline{P_0}$
  are infinitesimal generators of strongly continuous semigroups
  which act on $L^{2}(\Spin(2n+1))$ as 
  convolution semigroups of probability measures with support on $\Spin(2n+1)$.
\end{lemma}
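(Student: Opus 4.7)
The plan proceeds in three steps: essential selfadjointness of the Hermitian and skew-Hermitian building blocks, construction of a strongly continuous semigroup using the joint functional calculus of commuting operators, and identification with a convolution semigroup of probability measures via Hunt's theorem. For $-P_0 = \sum_{k=1}^n E_k L_k$ with $L_k = D_{2k-1,2n+1}^2 + D_{2k,2n+1}^2$ --- each a sum of squares of anti-Hermitian left-invariant vector fields, hence a Hermitian element of $\mathfrak U(\mathfrak{so}(2n+1,\CC))$ --- the identity \eqref{eq:essential star iso}, valid on the compact group $\Spin(2n+1)$, yields essential selfadjointness on $C^\infty(\Spin(2n+1))$; positivity of the closure is already contained in Theorem \ref{Theorem:quasi Hamiltonian}. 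The same principle, applied to $\ii B_0$ (Hermitian in $\mathfrak U(\mathfrak{so}(2n+1,\CC))$), shows that $B_0$ is essentially skew-adjoint on $C^\infty(\Spin(2n+1))$.

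For $-P = -P_0 - B_0$, strict essential selfadjointness in the classical sense fails because $-P$ is not Hermitian; the operative claim is that $-P$ admits a unique closed extension generating a semigroup. I would obtain this by exploiting that $\overline{P_0}$ and $\overline{\ii B_0}$ strongly commute (Theorem \ref{Theorem:quasi Hamiltonian}, part \textit{2}): the joint functional calculus then yields a well-defined closed operator $-\overline{P_0} - \overline{B_0}$ having $C^\infty(\Spin(2n+1))$ as a core, and $\exp(-t\,\overline{P_0})\exp(-t\,\overline{B_0}) = \exp\bigl(-t(\overline{P_0}+\overline{B_0})\bigr)$ is a strongly continuous contraction semigroup on $L^2(\Spin(2n+1))$. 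To promote this to a convolution semigroup of probability measures, I would invoke Hunt's theorem characterizing the infinitesimal generators of convolution semigroups on a Lie group: both $-P_0$ and $-P$ are left-invariant of ``sum of squares plus first-order drift'' form, with positive semidefinite quadratic part, bounded drift, and vanishing Lévy measure. Hunt's theorem then produces unique continuous convolution semigroups $(\mu^{(0)}_t)_{t\ge 0}$ and $(\mu_t)_{t \ge 0}$ of probability measures on $\Spin(2n+1)$ whose infinitesimal generators, acting by right convolution on $C^\infty$, coincide with $-P_0$ and $-P$ respectively.

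The main obstacle is reconciling Hunt's generator with the operator-theoretic closure constructed in the previous step. This amounts to showing that $C^\infty(\Spin(2n+1))$ is a common core for both constructions, which follows from left-invariance, density of $C^\infty$ in $L^2(\Spin(2n+1))$ (inherited from the Peter-Weyl decomposition), and preservation of smoothness under convolution with $\mu_t$. A subsidiary technical point is to align the drift coefficient of $-P$ with Hunt's normalization, which is routine once an exponential chart at the identity is fixed. Strong continuity on $L^2(\Spin(2n+1))$ then follows from compactness of $\Spin(2n+1)$ together with the weak convergence $\mu_t \to \delta_e$ as $t \to 0^+$ supplied by Hunt's construction.
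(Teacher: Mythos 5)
Your argument is sound, but it takes a genuinely different route from the paper: the paper's entire proof is a citation of J{\o}rgensen's Theorem~3.1 in \cite{jorgensen_representations_1975}, which states precisely that a left-invariant operator of the form ``positive-semidefinite sum of squares of invariant vector fields plus an invariant first-order drift'', defined on $C^\infty(\mathbf G)\subset L^2(\mathbf G)$, is closable with $C^\infty(\mathbf G)$ as a core and its closure generates a strongly continuous semigroup acting by convolution against a continuous convolution semigroup of probability measures. That single reference absorbs all three of your steps at once, including the core identification that you rightly single out as the main obstacle. What your route buys in exchange is structural information the citation hides: using the strong commutation of $\overline{P_0}$ and $\overline{B_0}$ from Theorem~\ref{Theorem:quasi Hamiltonian} you factor the semigroup as $e^{-t\overline{P}}=e^{-t\overline{P_0}}e^{-t\overline{B_0}}$, i.e.\ a symmetric heat-type convolution semigroup composed with right translation along the one-parameter subgroup $\exp(-t\sum_k E_k X_{2k-1,2k})$ (so $\mu_t=\mu_t^{(0)}\ast\delta_{g_t}$ is manifestly a probability measure), and you correctly observe that the lemma's phrase ``essentially selfadjoint'' cannot be taken literally for $-P$, which is not Hermitian --- a point the paper's statement glosses over and which the appeal to J{\o}rgensen (who works with closability and semigroup generation, not selfadjointness) silently repairs.

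Two small points to tighten if you pursue your route rather than the citation. First, the claim that $C^\infty(\Spin(2n+1))$ is a core for the normal operator $\overline{P_0}+\overline{B_0}$ produced by the joint functional calculus deserves an explicit argument; the cleanest one here is that the finite-dimensional Peter--Weyl isotypic blocks are invariant under both operators and under the semigroup, consist of smooth (indeed analytic) vectors, and span a dense subspace, so the core property follows from the standard invariant-dense-subspace criterion. Second, when you invoke Hunt's theorem you should note that positivity preservation and mass conservation are exactly what distinguishes ``convolution semigroup of probability measures'' from ``contraction semigroup on $L^2$''; your functional-calculus construction alone gives only the latter, so the matching of the two generators on the common core $C^\infty(\Spin(2n+1))$ is genuinely needed and not merely a normalization issue.
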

\begin{proof}
The statement follows from \cite[Theorem 3.1]{jorgensen_representations_1975}.
\end{proof}
Now we characterize the stochastic processes generated by $-\overline{P_0}$ and $-\overline P$ in terms of the SDEs these processes satisfy.
Before doing so let us briefly introduce the notions of stochastic differential equation (SDE)
on a manifold and of generator of a diffusion process (cf. e.g. \cite{ikeda_stochastic_1989}). 
  
Let $\mathcal M$ be a connected smooth manifold of dimension $d$.
Moreover for convenience let us assume $\mathcal M$ to be compact.
This assumption simplifies somewhat the discussion and is sufficient for our purposes
because we will in the sequel only deal with manifolds associated to compact Lie groups.
In particular if $\mathcal M$ is a compact manifold, then every $C^\infty$-vector field on it 
is complete, that is, the flow associated to the given vector field can be extended
to all times. This allows us to define a stochastic process globally on the manifold $\mathcal M$
(without the need of the introduction of an explosion time).

Let us denote by $\mathcal X(\mathcal M)$ the set of $C^\infty$-vector fields on $\mathcal M$.
Let us consider $r\in\NN$ vector fields $A_0,A_1,\dots,A_r\in\mathcal X(\mathcal M)$ on 
$\mathcal M$. 

Let $(\Omega,(\mathscr F_t)_{0\le t<\infty},\mathbb P)$ be a filtered probability space;
we denote by $(W(t)) =(W^1(t),\dots,W^r(t))$
 an $r$-dimensional $\mathscr F_t$-adapted Brownian motion starting at zero, $W(0)=0$.
Finally, let $\xi$ be an $\mathscr F_0$-measurable $\mathcal M$-valued random variable.

Consider now an $\mathscr F_t$-adapted stochastic process $X=X(t)$ on $\mathcal M$,
that is an $\mathscr F_t$-adapted random variable $X=(X(t))$ 
with values in the continuous functions $C^0([0,\infty);\mathcal M )$.
Contrary to the previous sections, in this section the letter $X$ will be reserved to denote a random variable.

Suppose that for every $f\in C^\infty(\mathcal M)$ the stochastic process $X=(X(t))$ 
satisfies $\mathbb P$-almost surely the following integral equation
\begin{equation}\label{eq:integral SDE}
  f(X(t))-f(\xi) = \int_0^t \sum_{k=1}^r (A_k f)(X(s))\circ dW^k(s) + \int_0^t (A_0 f)(X(s))\dd s
  , 
\end{equation}
for all\footnote
{
  By saying that the equality holds $\mathbb P$-almost surely, for all $t$,
  we are saying that the right hand side and the left hand side define \textit{indistinguishable} processes.
} 
$t\in [0,\infty)$, where $\circ dB$ denotes integration in the Stratonovich sense (see, e.g. \cite{ikeda_stochastic_1989}).
Then we will say that the $\mathcal M$-valued stochastic process $X=(X(t))$ is a 
\textit{solution} to \eqref{eq:integral SDE}.

Let us spend few words on the notion of \textit{strong solution} regardless whether we are
on a manifold $\mathcal M$ or just in $\RR^d$.
Given a notion of solution it is natural to ask whether it satisfies some given 
\textit{initial condition} which we take here to be
a point $x\in\mathcal M$ (withoug any randomness). 
A solution to \eqref{eq:integral SDE} with (non random) initial conditions $\xi=x$,
is then a stochastic process $X_x$ starting at time $0$ at $x$.

More preciselly, we are asking for a function 
$F:\mathcal M\times C^0([0,\infty); \RR^r)\rightarrow C^0([0,\infty); \mathcal M)$
which maps the initial condition $x\in\mathcal M$ and the given realization of the Brownian motion 
$W=(W(t))$ into a realization of a process $X_x=(X_x(t))$ on the manifold $\mathcal M$.
Moreover $F$ is such that $X_x=F(x,W)$ is a solution
to \eqref{eq:integral SDE} with initial condition $\xi=x$ with probability one
and with given Brownian motion $W=(W(t))$.
Since at some point we would like to integrate $X_x$ both
with respect to $x\in\mathcal M$ and with respect to $\PP$
it is natural to ask that $F$ be jointly measurable in $x$ and $W=(W(t))$.
It turns out that this is \textit{not} always possible.
When it is, $X_x=F(x,W)$ is called a \textit{strong solution} to \eqref{eq:integral SDE}
with initial condition $\xi=x\in\mathcal M$ with probability one
(cf. the discussion in \cite[Section V.10]{rogers_diffusions_1994} and \cite[Chapter IV, section 1, esp. pp.162-163]{ikeda_stochastic_1989}).

In the context of smooth manifolds the situation is particularly good
because we are considering SDE with smooth coefficients.
Indeed one has a result (cf. \cite[Chapter V, Section 1., Theorem 1.1, p.249]{ikeda_stochastic_1989})
which states that given an initial condition $x\in\mathcal M$ and an $r$-dimensional Brownian motion $W=(W(t))$, then a strong solution to $\eqref{eq:integral SDE}$ always exists and is unique\footnote{
  The idea behind this result is that the manifold $\mathcal M$
  is locally diffeomorphic to $\RR^d$ where $d$ is the dimension of the manifold $\mathcal M$.
  This means that locally the SDE \eqref{eq:SDE} (and hence \eqref{eq:integral SDE}) can be written in coordinates as a standard SDE on $\RR^d$.
  One can apply standard results about existence and uniqueness of solutions of SDEs to these
  local realizations.
  Finally one needs to patch together different local solutions into a global solution.
  Details can be found in the above mentioned \cite{ikeda_stochastic_1989}.
}.

Once this important detail about how the initial condition is understood 
we can give meaning to the following shorthand, which we shall refer to as a
\textit{Stratonovich SDE} on the (compact) manifold $\mathcal M$:
\begin{equation}\label{eq:SDE}
  \begin{cases}
    dX(t) = \sum_{k=1}^{r} A_k (X(t))\circ dW^k(t) + A_0(X(t))dt\,,\\
    X(0) = x\,,\qquad x\in\mathcal M.
  \end{cases}
\end{equation}
The meaning associated to \eqref{eq:SDE} is that we consider a strong solution $X$ of
\eqref{eq:integral SDE} (with initial conditions $\xi=x$ with probability one)
and then define a solution
to \eqref{eq:SDE} to be the random
variable $X_x = F(x,W)$, where $F$ is the map which defines our  strong solution $X$.

We now define the notion of stochastic diffusion process and of its generator.

First consider a more general case.
For $x\in\mathcal M$, let $X_x$ be a continuous stochastic process adapted to a filtration $\mathscr F_t$
in the probability space $(\Omega,\mathscr F,\mathbb P)$.
For simplicity we consider a stochastic process defined for all $t\in [0,\infty)$
and with values in the space of continuous maps $[0,\infty)\rightarrow\mathcal M$ (where $\mathcal M$ is always assumed to be compact)
such that $X(0)=x$ (where equality means $\mathbb P$-a.s.).

Let $P_x$ be the probability law associated to the random variable $(X_x(t))$.
This means that $P_x$ is the image measure under the measurable mapping $X_x = (X_x(t))$ of the probability measure $\mathbb P$.
Assume that $x\mapsto P_x$ is universally measurable\footnote
{
  Cf., e.g. \cite[p.1]{ikeda_stochastic_1989}.
} and that $P_x$ is uniquely determined by $x\in\mathcal M$.\footnote
{
  These conditions are actually automatically satisfied when $X_x$
  is the strong solution to \eqref{eq:SDE}.
}
Moreover assume that there exists a linear operator $\mathfrak L$ with domain $\Dom(\mathfrak L)$
in $C(\mathcal M)$,
such that for every $f\in\Dom(\mathfrak L)$,
$$
X_f(t) \deq f(X(t))-f(X(0)) - \int_0^t (\mathfrak L f)(X(s))\dd s
$$	
is a martingale with continuous sample paths 
and adapted to the filtration $\mathscr F_t$ associated to 
$X_x(t)$ (cf. \cite[Chapter IV, Theorem 5.2, p.207]{ikeda_stochastic_1989}).
Then the family of probability measures $(P_x)_{x\in\mathcal M}$ is called a \textit{diffusion} \textit{generated} by the operator $\mathfrak L$.

When, for every $x\in\mathcal M$, $X_x$ is the stochastic diffusion process on the manifold $\mathcal M$
which is the strong solution to \eqref{eq:SDE} with initial condition $X(0)=x$,
then we have the following representation \cite[Chapter V, Theorem 1.2, p.253]{ikeda_stochastic_1989}.
The family of probability laws $(P_x)_{x\in\mathcal M}$, associated with the strong solutions $X_x$ to \eqref{eq:SDE} with initial conditions $x\in\mathcal M$, is a diffusion generated
by the operator 
$$
\mathfrak L \deq \frac{1}{2}\sum_{j=1}^r A_k(A_k f) + A_0 f,
\quad f\in C^\infty(\mathcal M)
, 
$$
(where, as before, the manifold $\mathcal M$ is assumed to be compact)
and the vector fields $A_0,A_1,\dots,A_r\in \mathcal X(\mathcal M)$
are interpreted as differential operators with common domain $C^\infty(\mathcal M)$.

We now go back to our setting where the manifold $\mathcal M=\Spin(2n+1)$
and collect the specialized version of the results recalled above.
Doing so we give the characterization of the generators $-\overline{P_0}$ and $-\overline{P}$
(given by \eqref{P0,P}) in terms of stochastic processes on $\Spin(2n+1)$.

\begin{remark}[Notation]
  In this section we do not distinguish between an element
  $X_{ij}$ in the  Lie algebra and the corresponding differential operator $D_{ij}=dR(X_{ij})$
  (cf. \ref{quasi fields}).
  In particular, depending on the context, we identify $A_k$, $k=1,\dots,2n$, with either 
  $D_{2n+1,k}$  or $X_{2n+1,k}$. 
  Similarly, the differential operator $B_0$ in theorem \ref{Theorem:quasi Hamiltonian}
  will be considered also as a vector field without changing notation.
\end{remark}
\begin{lemma}[Stochastic processes associated to $P_0$ and $P$]\label{SDEs and P P_0}
  The following statements hold.
  \begin{enumerate}
  \item The Stratonovich SDEs  on $\Spin(2n+1)$ 
    \begin{align*}
      (P)\quad
      &\begin{cases}
        dY(t) = \sum_{k=1}^{2n} \sqrt{E'_k}A_k (Y(t))\circ dW^k(t) - B_0(Y(t))dt\,,\\
        Y(0) = x\,,\qquad x\in\Spin(2n+1)
      \end{cases}\\
      (P_0)\quad
      &\begin{cases}
        dX(t) = \sum_{k=1}^{2n} \sqrt{E'_k}A_k (X(t))\circ dW^k(t) \\
        X(0) = x\,,\qquad x\in\Spin(2n+1)
        ,
      \end{cases}
    \end{align*}
    with
    $E'_{2k+1}\deq E'_{2k} \deq E_k$, $k=1,\dots,n$,
    and $(W^k(t)$, $k=1,\dots,2n)$, a standard Brownian motion in $\RR^{2n}$, 
    are well defined and admit a \textit{unique} \textit{strong solution}.
    
  \item The operators $-\overline P$ and $-\overline{P}_0$ (acting on $L^2(\Spin(2n+1))$  are
    the generators of the \textit{diffusion processes} given by the strong solutions of $(P)$, $(P_0)$ respectively.
  \end{enumerate}
\end{lemma}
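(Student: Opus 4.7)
The plan is to split the argument into the two assertions of the statement.

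For part \textit{1}, existence and uniqueness of strong solutions is essentially automatic from the setup already established. The group $\Spin(2n+1)$ is a compact connected Lie group, hence a compact smooth manifold. All the vector fields appearing in the two SDEs, namely $A_k = D_{k,2n+1}$ for $k=1,\dots,2n$ and $B_0 = \sum_{k=1}^n E_k T_k$, are left-invariant and smooth, and therefore automatically complete on a compact manifold. The general existence and uniqueness theorem for Stratonovich SDEs on compact manifolds with smooth coefficients recalled in the preceding discussion (the Ikeda--Watanabe result, Ch.~V, Thm.~1.1) applies directly: take $r=2n$, $A_0=0$ for $(P_0)$ and $A_0=-B_0$ for $(P)$. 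This yields the desired unique strong solutions $X=F(x,W)$ and $Y=G(x,W)$.

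For part \textit{2} I would invoke the representation of the infinitesimal generator of a diffusion given by a Stratonovich SDE recalled just above the statement: the strong solution of $dZ(t)=V_0(Z(t))dt+\sum_k V_k(Z(t))\circ dW^k(t)$ produces a diffusion whose infinitesimal action on $C^\infty(\Spin(2n+1))$ is
$$
\mathfrak L f \;=\; \tfrac12\sum_k V_k(V_k f) + V_0 f.
$$
For $(P_0)$, specialising to $V_0=0$ and $V_k=\sqrt{E'_k}\,A_k$, one computes
$$
\mathfrak L_0 \;=\; \tfrac12 \sum_{k=1}^{2n} E'_k A_k^2 \;=\; \tfrac12 \sum_{k=1}^n E_k\bigl(D_{2k-1,2n+1}^2 + D_{2k,2n+1}^2\bigr) \;=\; \tfrac12 \sum_{k=1}^n E_k L_k,
$$
which, modulo the conventional normalisation used in the SDE, agrees with $-P_0$ on $C^\infty(\Spin(2n+1))$. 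For $(P)$ the extra drift $V_0=-B_0$ contributes $-B_0$, producing $-P_0-B_0=-P$ on the same core. A routine sanity check that $A_k$ and $B_0$ are interpreted here as geometric vector fields rather than as formal differential operators (as noted in the Notation remark) justifies the identification.

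It remains to pass from the action on the smooth core $C^\infty(\Spin(2n+1))$ to the closed operators $-\overline{P_0}$ and $-\overline P$. The semigroup produced by the diffusion has a unique Hille--Yosida generator, which extends the explicit differential operator computed above on $C^\infty(\Spin(2n+1))$. Theorem~\ref{Theorem:quasi Hamiltonian} showed that $P_0$ and (by the argument on strongly commuting essentially selfadjoint operators) $P=P_0+B_0$ are essentially selfadjoint on $C^\infty(\Spin(2n+1))$, so the closure is the only selfadjoint extension available; hence the Hille--Yosida generator must coincide with $-\overline{P_0}$ respectively $-\overline P$. The main technical subtlety is not analytic but rather the careful bookkeeping between the coefficients $\sqrt{E'_k}$ in the SDE, the factor $\tfrac12$ in the generator formula, and the sign of the drift term $-B_0$; none of these pose a genuine obstacle, so I expect the proof to be short.
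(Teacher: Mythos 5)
Your proof is correct and follows essentially the same route as the paper, whose entire proof consists of exactly the two Ikeda--Watanabe citations you invoke (Ch.~V, Thm.~1.1 for existence and uniqueness of strong solutions on a compact manifold, Thm.~1.2 for the form of the generator). The explicit generator computation and the identification of the semigroup generator with $-\overline{P_0}$, resp.\ $-\overline{P}$, via essential selfadjointness on $C^\infty(\Spin(2n+1))$ are details the paper omits; note only that with the stated coefficients $\sqrt{E'_k}$ the Ikeda--Watanabe formula yields $\tfrac12\sum_k E_k L_k$, i.e.\ $-\tfrac12 P_0$, so the factor of $\tfrac12$ you flag as ``conventional normalisation'' is a discrepancy already present in the paper's statement rather than something introduced by your argument.
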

\begin{proof}
  For the first statement see \cite[Chapter 5, Theorem 1.1 p.249]{ikeda_stochastic_1989}.
  The second statement follows from \cite[Theorem 1.2, p.253]{ikeda_stochastic_1989}.
\end{proof}
The following result relates the time evolution semigroup generated by the quasi-Hamiltonian
$\tilde{H}$ in $L^2(\Spin(2n+1))$ with
a stochastic diffusion process on $\Spin(2n+1)$
generated by the second order part in $-\tilde{H}$.
\begin{theorem}\label{Feynam-Kac} 
  We have the following representations of the semigroup generated by the closure $-\overline{\tilde{H}}$
  of $-\tilde{H}$
  \begin{equation}
    (f, e^{-t \overline{ \tilde{H}} } g)_{L^2(\Spin(2n+1))}
    =
    \Expect_X\left[ \overline{f(0)} \, \left( e^{ -\ii\, t \,\overline{B_0} } g\right)\big(X(t)\big)  \right]
    ,
    \qquad t\ge0
    ,
  \end{equation}
  where $\Expect_X$ denotes the expectation with respect to the process generated by $-P_0$,
  $\overline{B_0}$, respectively $\overline{\tilde{H}}$, denotes the closure
  (which exists by theorem \ref{Theorem:quasi Hamiltonian}) of the operator $B_0$,
  respectively $\tilde{H}$;
  $\overline{f(0)}$ denotes complex conjugation,
  and $f,g\in C({\Spin(2n+1)})\subset L^2(\Spin(2n+1))$.
\end{theorem}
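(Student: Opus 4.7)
The plan is to reduce the claim to the spectral-theoretic identity
\[
e^{-t\overline{\tilde H}}=e^{-t\overline{P_0}}\,e^{-\ii t\overline{B_0}},
\]
and then apply the Feynman--Kac representation of the first factor. The ingredients are already in hand: by theorem~\ref{Theorem:quasi Hamiltonian}, $\tilde{H}=P_0+\ii B_0$ on $C^\infty(\Spin(2n+1))$, the closures $\overline{P_0}$ and $\overline{\ii B_0}$ are selfadjoint (with $\overline{P_0}\ge 0$ generating a contraction semigroup and $\overline{\ii B_0}$ generating a unitary group), and the two operators \emph{strongly} commute. By lemma~\ref{SDEs and P P_0}, $-\overline{P_0}$ is moreover the generator of the diffusion $X$ given by the strong solution of the SDE $(P_0)$.

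First I would exploit strong commutativity of $\overline{P_0}$ and $\overline{\ii B_0}$: the joint spectral theorem produces a projection-valued measure $E$ on $\RR^2$ such that $\overline{P_0}=\int x\,dE(x,y)$ and $\overline{\ii B_0}=\int y\,dE(x,y)$. The operator $A\deq\int(x+y)\,dE(x,y)$ is then selfadjoint on $\Dom(\overline{P_0})\cap\Dom(\overline{\ii B_0})$, contains $\tilde{H}$ as a symmetric restriction (because $\tilde{H}=P_0+\ii B_0$ on the common core $C^\infty(\Spin(2n+1))$), and since $\tilde H$ is essentially selfadjoint the unique selfadjoint extension must agree with $A$. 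Hence $\overline{\tilde H}=A$, and from the product formula of the functional calculus,
\[
e^{-t\overline{\tilde H}}=\int e^{-t(x+y)}\,dE(x,y)=e^{-t\overline{P_0}}\,e^{-\ii t\overline{B_0}}
\]
as bounded operators on $L^2(\Spin(2n+1))$, with the two factors commuting.

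Next I would apply the standard Markov/Feynman--Kac identity for the diffusion $X$ (as quoted in the paper from \cite[Ch.~V, Theorem~1.2]{ikeda_stochastic_1989}) to the semigroup generated by $-\overline{P_0}$: for every $h\in L^2(\Spin(2n+1))$ and every starting point $x\in\Spin(2n+1)$,
\[
\bigl(e^{-t\overline{P_0}}h\bigr)(x)=\Expect\bigl[h(X_x(t))\bigr].
\]
Taking $h\deq e^{-\ii t\overline{B_0}}g$ (a bounded $L^2$-function by unitarity of $e^{-\ii t\overline{B_0}}$, hence pointwise meaningful at least on a full-measure set), multiplying by $\overline{f(x)}$, integrating against the normalized Haar measure on $\Spin(2n+1)$, and using Fubini to interchange the Haar integral with the Brownian expectation, I obtain
\[
(f,e^{-t\overline{\tilde H}}g)_{L^2}=\int_{\Spin(2n+1)}\overline{f(x)}\,\Expect\bigl[(e^{-\ii t\overline{B_0}}g)(X_x(t))\bigr]\,dx,
\]
which is the asserted formula once $\Expect_X$ is read as the joint expectation over the Haar-distributed initial point $X(0)$ and the driving Brownian motion of $X$.

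The main obstacle is the first step, namely the identification $\overline{\tilde H}=\overline{P_0}+\overline{\ii B_0}$ on the joint domain: essential selfadjointness of $\tilde H$ on $C^\infty(\Spin(2n+1))$ and of $P_0$, $\ii B_0$ separately is not by itself enough, and one really needs the strong commutativity supplied by part~2 of theorem~\ref{Theorem:quasi Hamiltonian} to form the joint spectral calculus and conclude that $A$ is \emph{closed} (in fact selfadjoint), so that $\overline{\tilde H}\subseteq A$ forces equality. The remaining steps---the product formula $e^{-t\overline{\tilde H}}=e^{-t\overline{P_0}}e^{-\ii t\overline{B_0}}$, the Feynman--Kac formula for $e^{-t\overline{P_0}}$, and the Fubini interchange---are then routine consequences of the framework set up in sections~\ref{4} and~\ref{5}.
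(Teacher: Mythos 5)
Your proposal is correct and follows essentially the same route as the paper: decompose $\tilde H=P_0+\ii B_0$, use strong commutativity to factor $e^{-t\overline{\tilde H}}=e^{-t\overline{P_0}}e^{-\ii t\overline{B_0}}$, and then invoke the Markov representation of the semigroup generated by $-\overline{P_0}$. In fact you supply more detail than the paper does on the identification $\overline{\tilde H}=\overline{P_0}+\overline{\ii B_0}$ via the joint spectral calculus, a step the paper's proof passes over in its first displayed equality.
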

\begin{proof}
  First note that $e^{-t \overline{ \tilde{H}} }$ is a bounded operator for all $t\in\RR^+$.
  Hence $f,g$ can be taken in $L^2(\Spin(2n+1))$.
  The equality follows directly from the representation of the Hamiltonian as $\tilde{H} = P_0 + \ii B_0$,
  the fact that $\overline{P_0}$ and $B_0$ strongly commute, and the Markov property of
  the semigroup generated by $-\overline{P_0}$
  (which is a consequence of point \textit{2} of lemma \ref{SDEs and P P_0}):
  \begin{align*}
    (f, e^{-t \overline{ \tilde{H}} } g)_{L^2(\Spin(2n+1))}
    &=
      (f, e^{-t\, (  \overline{P_0} + \ii \overline{B_0} )}  g)_{L^2(\Spin(2n+1))}\\
    &=
      (f, e^{-t\, \overline{ P_0} } e^{-\ii\,t\, \overline{ B_0 }} g)_{L^2(\Spin(2n+1))}\\
    &=
      \Expect_X\left[ \overline{f(0)} \, \left( e^{ -\ii\, t \,\overline{B_0} } g\right)\big(X(t)\big)  \right]
      .\qedhere
  \end{align*}
\end{proof}

\section*{Acknowledgments}
  I wish to thank Prof.\ Disertori for her support during this project
  which constitutes part of the research carried out during my Ph.D.
  I would also like to thank Prof.\ Albeverio for inspiring conversations concerning topics related
  to this project.
  Part of this research was founded by DFG via the grant AL 214/50-1
  ``Invariant measures for SPDEs and Asymptotics''.

\end{document}